\newcommand{\cut}[1]{}
\newcommand{\shrink}{\vskip -2ex}
\newcommand{\HL}[1]{\color{black}#1}
\newcommand{\blankline}{\vspace*{2ex}}
\newcommand{\remark}[1]{\noindent\textbf{Remark: }#1}
\newtheorem{theorem}{Theorem}
\newtheorem{definition}{Definition}
\newtheorem{example}{Example}
\newtheorem{lemma}{Lemma}
\newtheorem{corollary}{Corollary}
\newtheorem{assumption}{Assumption}
\newtheorem{procedure}{Procedure}
\def\encoding{\,\xRightarrow{\text{encoding}}\,}
\DeclareMathOperator{\code}{code}
\DeclareMathOperator{\cost}{cost}
\DeclareMathOperator{\tree}{tree}
\DeclareMathOperator{\Dom}{Dom}
\DeclareMathOperator{\Null}{null}
\begin{document}

\title{Sampling-Based Query Re-Optimization}

\author{
\alignauthor{
Wentao Wu
\hspace{.8cm}
Jeffrey F. Naughton}
\hspace{.8cm}
Harneet Singh\\
\vspace{.2cm}
       Department of Computer Sciences, University of Wisconsin-Madison\\
       \vspace{.2cm}
       \{wentaowu, naughton, harneet\}@cs.wisc.edu
}

\maketitle



\begin{abstract}

Despite of decades of work, query optimizers still make mistakes on ``difficult'' queries because of bad cardinality estimates, often due to the interaction of multiple predicates and correlations in the data.
In this paper, we propose a low-cost post-processing step that can take a plan produced by the optimizer, detect when it is likely to have made such a mistake, and take steps to fix it.
Specifically, our solution is a sampling-based iterative procedure that requires almost no changes to the original query optimizer or query evaluation mechanism of the system.
We show that this indeed imposes low overhead and catches cases where three widely used optimizers (PostgreSQL and two commercial systems) make large errors.

%

\end{abstract} 

\section{Introduction}

Query optimizers rely on decent cost estimates of query plans. Cardinality/selectivity estimation is crucial for the accuracy of cost estimates. Unfortunately, although decades of research has been devoted to this area and significant progress has been made, cardinality estimation remains challenging. In current database systems, the dominant approach is to keep various statistics, primarily histograms, about the data.
While histogram-based approaches have worked well for estimating selectivities of local predicates (i.e., predicates over a column of a base table), query optimizers still make mistakes on ``difficult'' queries, often due to the interaction of multiple predicates and correlations in the data~\cite{Lohman-critique}.

Indeed, there is a great deal of work in the literature exploring selectivity estimation techniques beyond histogram-based ones (see Section~\ref{sec:relatedwork}).
Nonetheless, histogram-based approaches remain dominant in practice because of its low overhead.
Note that, query optimizers may explore hundreds or even thousands of candidates when searching for an optimal query plan, and selectivity estimation needs to be done for each candidate.
As a result, a feasible solution has to improve cardinality estimation quality without significantly increasing query optimization time.

In this paper, we propose a low-cost post-processing step that can take a plan produced by the optimizer, detect when it is likely to have made such a mistake, and take steps to fix it.
Specifically, our solution is a sampling-based iterative procedure that requires almost no changes to the original query optimizer or query evaluation mechanism of the system.
We show that this indeed imposes low overhead and catches cases where three widely used optimizers (PostgreSQL and two commercial systems) make large errors.

In more detail, sampling-based approaches (e.g.,~\cite{Charikar-sample00,Haas-sample96,Lipton-sample90}) automatically reflect correlation in the data and between multiple predicates over the data, so they can provide better cardinality estimates on correlated data than histogram-based approaches.
However, sampling also incurs higher overhead.
In previous work~\cite{WuCHN13,WuCZTHN13,WuWHN14}, the authors investigated the effectiveness of using sampling-based cardinality estimates to get better query running time predictions.
The key observation is the following: while it is infeasible to use sampling for all plans explored by
the optimizer, it is feasible to use sampling as a ``post-processing'' step after the search is finished to detect potential errors in optimizer's original cardinality estimates for the final chosen plan.

Inspired by this observation, our basic idea is simple: if significant cardinality estimation errors are detected, the optimality of the returned plan is then itself questionable, so we go one step further to let the optimizer re-optimize the query by also feeding it the cardinality
estimates refined via sampling.
This gives the optimizer second chance to generate a different, perhaps better, plan.
Note that we can again apply the sampling-based validation step to this new plan returned by the optimizer.
It therefore leads to an iterative procedure based on feedback from sampling: we can repeat this optimization-then-validation loop until the plan chosen by the optimizer does not change.
The hope is that this re-optimization procedure can catch large optimizer errors \emph{before} the system even begins executing the chosen query plan.

A couple of natural concerns arise regarding this simple query re-optimization approach. First, how efficient is it? As we have just said, sampling should not be abused given its overhead. Since we propose to run plans over samples iteratively, how fast does this procedure converge? To answer this question, we conduct a theoretical analysis as well as an experimental evaluation. Our theoretical study suggests that, the expected number of iterations can be bounded by $O(\sqrt{N})$, where $N$ is the number of plans considered by the optimizer in its search space. In practice, this upper bound can rarely happen. Re-optimization for most queries tested in our experiments converges after only a few rounds of iteration, and the time spent on re-optimization is ignorable compared with the corresponding query running time.

Second, is it useful? Namely, does re-optimization really generate a better query plan?
{\HL
This raises the question of how to evaluate the effectiveness of re-optimization.
Query optimizers appear to do well almost all of the time.
But the experience of optimizer developers we have talked to is that there are a small number of ``difficult'' queries that cause them most of the pain.
That is, most of the time the optimizer is very good, but when it is bad, it is very bad.
Indeed, Lohman~\cite{Lohman-critique} recently gave a number of compelling real-world instances of optimizers
that, while presumably performing well overall, make serious mistakes on specific queries and data sets. 
It is our belief that an important area for optimizer research is to focus precisely on these few ``difficult'' queries.

We therefore choose to evaluate re-optimization over those difficult, corner-case queries.
Now the hard part is characterizing exactly what these ``difficult'' queries look like.
This will inevitably be a moving target.
If benchmarks were to contain ``difficult'' queries, optimizers would be forced to handle them, and they would no longer be ``difficult,'' so we cannot look to the major benchmarks for examples.
In fact, we implemented our approach in PostgreSQL and tested it on the TPC-H benchmark database, and we did observe significant performance improvement for certain TPC-H queries (Section~\ref{sec:experiments:tpch}).
However, for most of the TPC-H queries, the re-optimized plans are exactly the same as the original ones.
We also tried the TPC-DS benchmark and observed similar phenomena (Appendix~\ref{appendix:sec:exp:tpcds}).
Using examples of real-world difficult queries would be ideal, but we have found it impossible to find well-known public examples of these queries and the data sets they run on.

It is, however, well-known that many difficult queries are made difficult by correlations in the data --- for example, correlations
between multiple selections, and more likely correlations between selections and joins~\cite{Lohman-critique}.
This is our target in this paper.
It is in fact very easy to generate examples of these queries and data sets that confuse all the optimizers (PostgreSQL and two commercial RDBMS) that we tested - such examples are the basis for our ``optimizer torture test'' presented in Section~\ref{sec:benchmark}.
We observed that re-optimization becomes superior on these cases (Section~\ref{sec:experiments:ott}).
While original query plans often take hundreds or even thousands of seconds to finish, after re-optimization all queries can finish in less than 1 second.
We therefore hope that our re-optimization technique can help cover some of those corner cases that are challenging to current query optimizers.}

The idea of query re-optimization goes back to two decades ago (e.g.~\cite{KabraD98,MarklRSLP04}). The main difference between this line of work and our approach is that re-optimization was previously done \emph{after} a query begins to execute whereas our re-optimization is done \emph{before} that. While performing re-optimization during query execution has the advantage of being able to observe accurate cardinalities, it suffers from (sometimes significant) runtime overheads such as materializing intermediate results that have been generated. 
Meanwhile, runtime re-optimization frameworks usually require significant changes to query optimizer's architecture.
Our compile-time re-optimization approach is more lightweight.
The only additional cost is due to running tentative query plans over samples.
The modification to the query optimizer and executor is also limited:
our implementation in PostgreSQL needs only several hundred lines of C code.
Furthermore, we should also note that our compile-time re-optimization approach actually does not conflict with these previous runtime re-optimization techniques: the plan returned by our re-optimization procedure could be further refined by using runtime re-optimization. It remains interesting to investigate the effectiveness of this combination framework.


The rest of the paper is organized as follows. We present the details of our iterative sampling-based re-optimization algorithm in Section~\ref{sec:algorithm}. We then present a theoretical analysis of its efficiency in terms of the number of iterations it requires and the quality of the final plan it returns in Section~\ref{sec:analysis}.
To evaluate the effectiveness of this approach, we further design a database (and a set of queries) with highly correlated data in Section~\ref{sec:benchmark}, and we report experimental evaluation results on this database as well as the TPC-H benchmark databases in Section~\ref{sec:experiments}. We discuss related work in Section~\ref{sec:relatedwork} and conclude the paper in Section~\ref{sec:conclusion}.

\section{The Re-Optimization Algorithm}\label{sec:algorithm}

In this section, we first introduce necessary background information and terminology, and then present the details of the re-optimization algorithm. We focus on using sampling to refine selectivity estimates for join predicates, which are the major source of errors in practice~\cite{Lohman-critique}.
The sampling-based selectivity estimator we used is tailored for join queries~\cite{Haas-sample96}, and it is our goal in this paper to study its effectiveness in query optimization when combined with our proposed re-optimization procedure.
Nonetheless, sampling can also be used to estimate selectivities for other types of operators, such as aggregates (i.e., ``Group By'' clauses) that require estimation of the number of distinct values (e.g.~\cite{Charikar-sample00}). We leave the exploration of integrating other sampling-based selectivity estimation techniques into query optimization as interesting future work.

\subsection{Preliminaries}

In previous work~\cite{WuCHN13,WuCZTHN13,WuWHN14}, the authors used a sampling-based selectivity estimator proposed by Haas et al.~\cite{Haas-sample96} for the purpose of predicting query running times. In the following, we provide an informal description of this estimator.

Let $R_1$, ..., $R_K$ be $K$ relations, and let $R_k^s$ be the sample table of $R_k$ for $1\leq k\leq K$. Consider a join query $q=R_1\bowtie\cdots\bowtie R_K$. The selectivity $\rho_q$ of $q$ can be estimated as
$$\hat{\rho}_q = \frac{|R_1^s\bowtie\cdots\bowtie R_K^s|}{|R_1^s|\times\cdots\times|R_K^s|}.$$
It has been shown that this estimator is both unbiased and strongly consistent~\cite{Haas-sample96}: the larger the samples are, the more accurate this estimator is. Note that this estimator can be applied to joins that are sub-queries of $q$ as well.

\subsection{Algorithm Overview}

As mentioned in the introduction, cardinality estimation is challenging and cardinality estimates by optimizers can be erroneous.
This potential error can be noticed once we apply the aforementioned sampling-based estimator to the query plan generated by the optimizer.
However, if there are really significant errors in cardinality estimates, the optimality of the plan returned by the optimizer can be in doubt.

If we replace the optimizer's cardinality estimates with sampling-based estimates and ask it to re-optimize the query, what would happen? Clearly, the optimizer will either return the same query plan, or a different one. In the former case, we can just go ahead to execute the query plan:
the optimizer does not change plans even with the new cardinalities.
In the latter case, 
the new cardinalities cause the optimizer to change plans.
However, this new plan may still not be trustworthy because the optimizer may still decide its optimality based on erroneous cardinality estimates. To see this, let us consider the following example.
\begin{example}\label{ex:joins-not-covered}
Consider the two join trees $T_1$ and $T_2$ in Figure~\ref{fig:transformations}.
Suppose that the optimizer first returns $T_1$ as the optimal plan.
Sampling-based validation can then refine cardinality estimates for the three joins: $A\bowtie B$, $A\bowtie B\bowtie C$, and $A\bowtie B\bowtie C\bowtie D$.
Upon knowing these refined estimates, the optimizer then returns $T_2$ as the optimal plan.
However, the join $C\bowtie D$ in $T_2$ is not observed in $T_1$ and its cardinality has not been validated.
\end{example}
Hence, we can again apply the sampling-based estimator to this new plan and repeat the re-optimization process. This then leads to an iterative procedure.

Algorithm~\ref{alg:reoptimization} outlines the above idea. Here, we use $\Gamma$ to represent the sampling-based cardinality estimates for joins that have been validated by using sampling. Initially, $\Gamma$ is empty. In the round $i$ ($i\geq 1$), the optimizer generates a query plan $P_i$ based on the current information preserved in $\Gamma$ (line 5). If $P_i$ is the same as $P_{i - 1}$, then we can terminate the iteration (lines 6 to 8). Otherwise, $P_i$ is new and we invoke the sampling-based estimator over it (line 9). We use $\Delta_i$ to represent the sampling-based cardinality estimates for $P_i$, and we update $\Gamma$ by merging $\Delta_i$ into it (line 10). We then move to the round $i + 1$ and repeat the above procedure (line 11).


\begin{algorithm}
  \SetAlgoLined
  \KwIn{$q$, a given SQL query}
  \KwOut{$P_q$, query plan of $q$ after re-optimization}
  \SetAlgoLined
  $\Gamma \leftarrow \emptyset$\;
  $P_0\leftarrow \Null$\;
  $i\leftarrow 1$\;
  \While {true} {
    $P_i\leftarrow GetPlanFromOptimizer(\Gamma)$\;
    \If{$P_i$ is the same as $P_{i-1}$} {
        \textbf{break}\;
    }
    $\Delta_i\leftarrow GetCardinalityEstimatesBySampling(P_i)$\;
    $\Gamma\leftarrow\Gamma\cup\Delta_i$\;
    $i\leftarrow i + 1$\;
  }
  Let the final plan be $P_q$\;
  \Return{$P_q$\;}
  \caption{Sampling-based query re-optimization}
\label{alg:reoptimization}
\end{algorithm}

Note that this iterative process has as its goal improving the selected plan, not
finding a new globally optimal plan. It is certainly possible that the
iterative process misses a good plan because the iterative process
does not explore the complete plan space --- it only explores neighboring transformations of the chosen plan.
Nonetheless, as we will see in Section~\ref{sec:experiments}, this local search is sufficient to catch and
repair some very bad plans.



\begin{figure}
\centering
\includegraphics[width=0.9\columnwidth]{./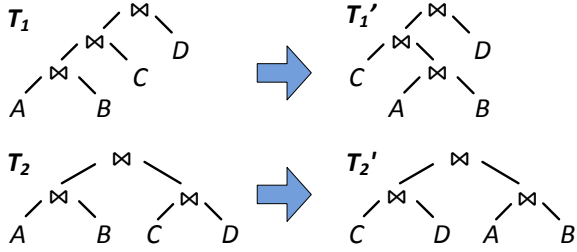}
\caption{Join trees and their local transformations.}
\label{fig:transformations}
\shrink
\end{figure}

\section{Theoretical Analysis} \label{sec:analysis}

In this section, we present an analysis of Algorithm~\ref{alg:reoptimization} from a theoretical point of view. We are interested in two aspects of the re-optimization procedure:
\begin{itemize}
\item \emph{Efficiency}, i.e., how many rounds of iteration does it require before it terminates?
\item \emph{Effectiveness}, i.e., how good is the final plan it returns compared to the original plan, in terms of the cost metric used by the query optimizer?
\end{itemize}
Our following study suggests that (i) the expected number of rounds of iteration in the worst case is upper-bounded by $O(\sqrt{N})$ where $N$ is the number of query plans explored in the optimizer's search space (Section~\ref{sec:analysis:efficiency}); and (ii) the final plan is guaranteed to be no worse than the original plan if
sampling-based cost estimates are consistent with the actual costs (Section~\ref{sec:analysis:optimality}).

\subsection{Local and Global Transformations}

{\HL
We start by introducing the notion of local/global transformations of query plans.
In the following, we use $\tree(P)$ to denote the \emph{join tree} of a query plan $P$.
A join tree is the logical skeleton of a physical plan, which is represented as the set of \emph{ordered} logical joins contained in $P$.
For example, the representation of $T_2$ in Figure~\ref{fig:transformations} is $T_2=\{A\bowtie B, C\bowtie D, A\bowtie B\bowtie C\bowtie D\}$.
}

{\HL
\begin{definition}[Local/Global Transformation]
Two join trees $T$ and $T'$ (of the same query) are \emph{local} transformations of each other if $T$ and $T'$ contain the same set of \emph{unordered} logical joins. Otherwise, they are global transformations.
\end{definition}
}

{\HL
In other words, local transformations are join trees that subject to only exchanges of left/right subtrees.
For example, $A\bowtie B$ and $B\bowtie A$ are different join trees, but they are local transformations.
In Figure~\ref{fig:transformations} we further present two join trees $T'_1$ and $T'_2$ that are local transformations of $T_1$ and $T_2$.
By definition, a join tree is always a local transformation of itself.

Given two plans $P$ and $P'$, we also say that $P'$ is a local/global transformation of $P$ if $\tree(P')$ is a local/global transformation of $\tree(P)$. In addition to potential exchange of left/right subtrees, $P$ and $P'$ may also differ in specific choices of physical join operators (e.g., hash join vs. sort-merge join). Again, by definition, a plan is always a local transformation of itself.
}

\subsection{Convergence Conditions}

At a first glance, even the convergence of Algorithm~\ref{alg:reoptimization} is questionable. Is it possible that Algorithm~\ref{alg:reoptimization} keeps looping without termination? For instance, it seems to be possible that the re-optimization procedure might alternate between two plans $P1$ and $P2$, i.e., the plans generated by the optimizer are $P1$, $P2$, $P1$, $P2$, ...
As we will see, this is impossible and Algorithm~\ref{alg:reoptimization} is guaranteed to terminate. We next present a sufficient condition for the convergence of the re-optimization procedure. We first need one more definition regarding plan coverage.

\begin{definition}[Plan Coverage]\label{definition:plan-coverage}
Let $P$ be a given query plan and $\mathcal{P}$ be a set of query plans. $P$ is \emph{covered} by $\mathcal{P}$ if
$$\tree(P)\subseteq \bigcup\nolimits_{P'\in\mathcal{P}}\tree(P').$$
\end{definition}
That is, all the joins in $\tree(P)$ are included in the join trees of $\mathcal{P}$. As a special case, any plan that belongs to $\mathcal{P}$ is covered by $\mathcal{P}$.

Let $P_i$ ($i\geq 1$) be the plan returned by the optimizer in the $i$-th re-optimization step. We have the following convergence condition for the re-optimization procedure:

\begin{theorem}[Condition of Convergence]\label{theorem:convergence}
Algorithm~\ref{alg:reoptimization} terminates after $n + 1$ ($n\geq 1$) steps if $P_n$ is covered by $\mathcal{P}=\{P_1, ..., P_{n-1}\}$.
\end{theorem}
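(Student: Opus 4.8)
The plan is to track the evolution of the validated-cardinality store $\Gamma$ across iterations and to exploit the fact that the optimizer is a deterministic function of $\Gamma$. Write $\Gamma_i$ for the contents of $\Gamma$ at the end of step $i$, with $\Gamma_0=\emptyset$, and let $\optimizer(\cdot)$ and $\sampling(\cdot)$ denote the two subroutines of Algorithm~\ref{alg:reoptimization}. Along the run that produces $P_1,\dots,P_n$ (i.e., no break occurs before step $n$), the loop obeys $P_i=\optimizer(\Gamma_{i-1})$ and $\Gamma_i=\Gamma_{i-1}\cup\sampling(P_i)$. A one-line induction on $i$ then establishes the invariant I would prove first: $\Gamma_i$ holds exactly the sampling-based estimates for the joins occurring in $\tree(P_1)\cup\cdots\cup\tree(P_i)$, since each $\sampling(P_i)$ contributes precisely the estimates for the joins of $P_i$.

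Next I would feed in the coverage hypothesis. By assumption $\tree(P_n)\subseteq\bigcup_{P'\in\mathcal{P}}\tree(P')$ with $\mathcal{P}=\{P_1,\dots,P_{n-1}\}$, so every join of $P_n$ already occurs among the joins of $P_1,\dots,P_{n-1}$. By the invariant, these are exactly the keys of $\Gamma_{n-1}$. Hence every estimate that $\sampling(P_n)$ would record is for a join already present in $\Gamma_{n-1}$; because the estimator of Haas et al. is a deterministic function of the fixed sample tables and the cardinality of a join depends only on the \emph{set} of relations joined (not on the left/right order of the tree), the re-computed value coincides with the stored one. Therefore the merge is inert: $\Gamma_n=\Gamma_{n-1}\cup\sampling(P_n)=\Gamma_{n-1}$.

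Finally I would close the loop using determinism of the optimizer. Since $\optimizer$ reads only $\Gamma$, and $\Gamma_n=\Gamma_{n-1}$, step $n+1$ produces $P_{n+1}=\optimizer(\Gamma_n)=\optimizer(\Gamma_{n-1})=P_n$. The termination guard $P_{n+1}=P_n$ then fires, so Algorithm~\ref{alg:reoptimization} breaks in its $(n+1)$-th step and halts after $n+1$ steps, as claimed. (If $P_n=P_{n-1}$ already holds, the guard fires one step earlier, which is consistent with the stated bound.)

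The main obstacle is the equality $\Gamma_n=\Gamma_{n-1}$: it is tempting to simply assert that ``nothing new is sampled,'' but making this airtight requires two care points. First, I must fix that $\Gamma$ is keyed by unordered sets of joined relations, so that coverage phrased over the \emph{ordered} joins of $\tree(\cdot)$ still guarantees that all keys of $\sampling(P_n)$ are present. Second, I must invoke the determinism (strong consistency on fixed samples) of the sampling estimator, so that re-estimating an already-validated join cannot overwrite its entry with a numerically different value and thereby perturb the optimizer's decision. Once $\Gamma_n=\Gamma_{n-1}$ is secured, the remainder is just bookkeeping plus the single determinism step for the optimizer.
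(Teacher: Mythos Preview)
Your proposal is correct and follows essentially the same approach as the paper's proof: coverage of $P_n$ by $\{P_1,\dots,P_{n-1}\}$ forces $\Delta_n\cup\Gamma=\Gamma$, so the optimizer sees the same $\Gamma$ in round $n+1$ as in round $n$, yielding $P_{n+1}=P_n$ and triggering the break. Your version is considerably more explicit than the paper's three-line argument---in particular, your care points about $\Gamma$ being keyed on unordered join operands and about the determinism of the sampling estimator on fixed samples are genuine technical prerequisites that the paper leaves implicit---but the underlying logic is identical.
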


\begin{proof}
If $P_n$ is covered by $\mathcal{P}$, then using sampling-based validation will not contribute anything new to
the statistics $\Gamma$. That is, $\Delta_n\cup\Gamma = \Gamma$. Therefore, $P_{n+1}$ will be the same as $P_n$, because the optimizer will see the same $\Gamma$ in the round $n+1$ as that in the round $n$. Algorithm~\ref{alg:reoptimization} then terminates accordingly (by lines 6 to 8).
\end{proof}
Note that the convergence condition stated in Theorem~\ref{theorem:convergence} is sufficient by not necessary. It could happen that $P_n$ is not covered by the previous plans $\{P_1,...,P_{n-1}\}$ but $P_{n+1}=P_n$ after using the validated statistics (e.g., if there are no significant errors detected in cardinality estimates of $P_n$).

\begin{corollary}[Convergence Guarantee]\label{corollary:termination}
Algorithm~\ref{alg:reoptimization} is guaranteed to terminate.
\end{corollary}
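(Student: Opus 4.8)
The plan is to combine the sufficient condition already isolated in Theorem~\ref{theorem:convergence} with a simple finiteness argument, since the whole procedure is driven by the accumulated statistics $\Gamma$. First I would record that $\Gamma$ can only grow: line~10 updates it by a union, so the join set it tracks is monotone non-decreasing across rounds. Every join that can ever enter $\Gamma$ (equivalently, that can ever appear in some $\tree(P_i)$) is a join over a subset of the finitely many base relations $R_1,\dots,R_K$ of $q$, and there are only finitely many join trees over $K$ relations. Hence the universe $\mathcal{U}$ of logical joins that can occur in any $\tree(P_i)$ is finite (crudely, at most $2^K$ distinct subsets of relations, each with finitely many order structures).

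Next I would argue by contradiction against non-termination. Suppose Algorithm~\ref{alg:reoptimization} never breaks; then it produces an infinite sequence of plans $P_1, P_2, \dots$. Consider the cumulative join sets $C_n = \bigcup_{k=1}^{n}\tree(P_k)$. Each $C_n \subseteq \mathcal{U}$ and $C_1 \subseteq C_2 \subseteq \cdots$, so this is a non-decreasing chain of subsets of a finite set, which must eventually be constant; thus there is some $n$ with $C_n = C_{n-1}$. But $C_n = C_{n-1}$ says precisely that $\tree(P_n)\subseteq\bigcup_{k=1}^{n-1}\tree(P_k)$, i.e.\ $P_n$ is covered by $\mathcal{P}=\{P_1,\dots,P_{n-1}\}$ in the sense of Definition~\ref{definition:plan-coverage}. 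Theorem~\ref{theorem:convergence} then forces termination after $n+1$ steps, contradicting the assumption that the loop runs forever. Therefore the algorithm must terminate.

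The steps are thus: (i) monotonicity of $\Gamma$ and of the cumulative join set $C_n$; (ii) finiteness of the join universe $\mathcal{U}$; (iii) stabilization of a monotone chain in a finite lattice; and (iv) the reduction of stabilization to the coverage hypothesis of Theorem~\ref{theorem:convergence}, which supplies termination. I do not expect a deep obstacle here, since Theorem~\ref{theorem:convergence} already carries the substantive content; the only care needed is in steps (ii) and (iv). One must make the finiteness claim precise and observe that the way to \emph{avoid} triggering the theorem's hypothesis at every round is to keep introducing a genuinely new join each round, which the finite bound on $\mathcal{U}$ rules out. A reassuring side remark is that what $\Gamma$ stores is order-insensitive (the cardinality of $R_{i_1}\bowtie\cdots\bowtie R_{i_m}$ depends only on the unordered set $\{R_{i_1},\dots,R_{i_m}\}$), so re-validating an already-seen join truly adds nothing; since ordered coverage as in Definition~\ref{definition:plan-coverage} implies every relevant join is already present, the sufficient condition of Theorem~\ref{theorem:convergence} applies with no gap, and the remainder is routine bookkeeping.
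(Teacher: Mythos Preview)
Your proposal is correct and follows essentially the same approach as the paper's own proof: both argue that the cumulative set of joins seen so far is monotone and bounded by a finite universe, hence must stabilize, at which point the coverage hypothesis of Theorem~\ref{theorem:convergence} is triggered and termination follows. Your version is more explicit (defining $C_n$, phrasing it as a contradiction, invoking stabilization of a chain in a finite set), but the underlying idea is identical.
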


\begin{proof}
Based on Theorem~\ref{theorem:convergence}, Algorithm~\ref{alg:reoptimization} would only continue if $P_n$ is not covered by $\mathcal{P}=\{P_1, ..., P_{n-1}\}$. In that case, $P_n$ should contain at least one join that has not been included by the plans in $\mathcal{P}$. Since the total number of possible joins considered by the optimizer is finite, $\mathcal{P}$ will eventually reach some fixed point if it keeps growing. The re-optimization procedure is guaranteed to terminate upon that fixed point, again by Theorem~\ref{theorem:convergence}.
\end{proof}

Theorem~\ref{theorem:convergence} also implies the following special case:
\begin{corollary}\label{corollary:local-transformation}
The re-optimization procedure terminates after $n+1$ ($n\geq 1$) steps if $P_n\not\in\mathcal{P}$ but $P_n$ is a local transformation of some $P\in\mathcal{P}=\{P_1, ..., P_{n-1}\}$.
\end{corollary}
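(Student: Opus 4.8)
The plan is to derive this corollary directly from Theorem~\ref{theorem:convergence} by showing that a local transformation of an already-validated plan contributes nothing new to the statistics $\Gamma$. The one wrinkle is that $\tree(\cdot)$ records \emph{ordered} joins, whereas the notion of local transformation only guarantees agreement on the \emph{unordered} joins; so I cannot literally assert the coverage inclusion $\tree(P_n)\subseteq\bigcup_{P'\in\mathcal{P}}\tree(P')$ required by the theorem. I would bridge this gap using the commutativity of the join operator.

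First I would observe that the sampling-based estimator of Haas et al.\ is order-invariant: the estimated cardinality of a join depends only on the \emph{set} of relations participating in it, since $|R_i^s\bowtie R_j^s|=|R_j^s\bowtie R_i^s|$ and the denominator $|R_i^s|\times\cdots\times|R_K^s|$ is symmetric in its factors. Consequently, the cardinality estimate that sampling records for an ordered join is identical to the one it would record for any reordering of the same relations. In other words, as far as $\Gamma$ is concerned, only the unordered join matters.

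Next I would invoke the hypotheses. Since $P\in\mathcal{P}=\{P_1,\dots,P_{n-1}\}$, the plan $P$ was returned and validated in some earlier round $j<n$ (the algorithm reached round $n$, so every earlier plan triggered line 9 of Algorithm~\ref{alg:reoptimization}), and hence $\Gamma$ already contains the sampling-based cardinality estimates for every join appearing in $\tree(P)$. Because $P_n$ is a local transformation of $P$, the two join trees contain exactly the same set of unordered joins. Combining this with the order-invariance established above, every cardinality estimate that sampling would produce on $P_n$ coincides with one already present in $\Gamma$; that is, $\Delta_n\cup\Gamma=\Gamma$. (The extra hypothesis $P_n\notin\mathcal{P}$ merely rules out the trivial case already subsumed by ordered-join coverage, so it plays no role in the argument beyond isolating the genuinely new situation.)

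At this point the conclusion follows exactly as in the proof of Theorem~\ref{theorem:convergence}: the optimizer sees the same $\Gamma$ in round $n+1$ as in round $n$, so $P_{n+1}=P_n$, and Algorithm~\ref{alg:reoptimization} terminates through lines 6--8 after $n+1$ steps. I expect the only delicate point to be the ordered-versus-unordered mismatch flagged above; everything else is a direct reduction to the already-established theorem, with commutativity of the join supplying the missing link.
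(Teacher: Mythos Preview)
Your proof is correct and follows essentially the same idea as the paper's: a local transformation of an already-validated plan shares the same (unordered) joins, so sampling on $P_n$ adds nothing new to $\Gamma$, and termination follows exactly as in Theorem~\ref{theorem:convergence}.

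The one difference is in how the ordered/unordered mismatch you flagged is handled. The paper's proof is terser: it simply asserts that because $\tree(P_n)$ and $\tree(P)$ contain the same set of unordered joins, $P_n$ is covered by $\mathcal{P}$ in the sense of Definition~\ref{definition:plan-coverage}, and then invokes Theorem~\ref{theorem:convergence} as a black box. In effect the paper is implicitly reading Definition~\ref{definition:plan-coverage} at the level of unordered joins, which is consistent with how $\Gamma$ is actually keyed. You instead bypass Definition~\ref{definition:plan-coverage} altogether, argue directly that the sampling estimator is symmetric in its operands so $\Gamma$ is indifferent to join order, and conclude $\Delta_n\cup\Gamma=\Gamma$ before replaying the last step of Theorem~\ref{theorem:convergence}. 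Your route is slightly longer but more careful about the very subtlety you identified; the paper's route is shorter but leaves that point implicit. Substantively the two arguments coincide.
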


{\HL
\begin{proof}
By definition, if $P_n$ is a local transformation of some $P\in\mathcal{P}$, then $\tree(P_n)$ and $\tree(P)$ contain the same set of unordered logical joins. By Definition~\ref{definition:plan-coverage}, $P_n$ is covered by $\mathcal{P}$. Therefore, the re-optimization procedure terminates after $n+1$ steps, by Theorem~\ref{theorem:convergence}.
\end{proof}
}
Also note that Corollary~\ref{corollary:local-transformation} has covered a common case in practice that $P_n$ is a local transformation of $P_{n-1}$.

Based on Corollary~\ref{corollary:local-transformation}, we next present an important property of the re-optimization procedure.
\begin{theorem}\label{theorem:sequence-transformation}
When the re-optimization procedure terminates, exactly one of the following three cases holds:
\begin{enumerate}[(1)]
\item It terminates after 2 steps with $P_2=P_1$.
\item It terminates after $n+1$ steps ($n > 1$). For $1\leq i\leq n$, $P_i$ is a global transformation of $P_j$ with $j < i$.
\item It terminates after $n+1$ steps ($n > 1$). For $1\leq i\leq n - 1$, $P_i$ is a global transformation of $P_j$ with $j < i$, and $P_n$ is a local transformation of some $P\in\mathcal{P}=\{P_1,...,P_{n-1}\}$.
\end{enumerate}
\end{theorem}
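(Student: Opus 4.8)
The plan is to first appeal to Corollary~\ref{corollary:termination}, which guarantees that the loop halts, say after $n+1$ steps with $n\geq 1$ and $P_{n+1}=P_n$, and then to branch on the value of $n$. If $n=1$ the loop stops after two steps with $P_2=P_1$, which is exactly case (1); the whole burden of the proof therefore lies in the regime $n>1$, where I must force the generated sequence $P_1,\ldots,P_n,P_{n+1}=P_n$ into case (2) or case (3).

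The key step is to read off the structure of the interior plans $P_1,\ldots,P_{n-1}$ from the fact that the loop did not stop early. For each $k$ with $1\leq k\leq n-1$ the loop continued past step $k+1$, so $P_{k+1}\neq P_k$. I would then apply the contrapositive of Theorem~\ref{theorem:convergence}---coverage of $P_k$ by $\{P_1,\ldots,P_{k-1}\}$ would have forced $P_{k+1}=P_k$---to conclude that $P_k$ is \emph{not} covered by $\{P_1,\ldots,P_{k-1}\}$, i.e. $\tree(P_k)$ contains a join absent from every earlier join tree. For each $j<k$, that witnessing join lies in $P_k$ but not in $P_j$, so the two plans have different sets of unordered joins and hence $P_k$ is a global transformation of $P_j$. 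This yields the common backbone of cases (2) and (3) for all indices up to $n-1$. (Alternatively, after first ruling out literal repeats among $P_1,\ldots,P_{n-1}$, Corollary~\ref{corollary:local-transformation} gives the same conclusion, since a local transformation of an earlier plan would have triggered termination one step too early.)

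It then remains to classify the terminal plan $P_n$. For each $j<n$, the plans $P_n$ and $P_j$ either share the same set of unordered joins or they do not, and exactly one of these holds. Hence either $P_n$ is a global transformation of \emph{every} $P_j$ with $j<n$---which, combined with the backbone, is exactly case (2) over the full range $1\leq i\leq n$---or $P_n$ is a local transformation of \emph{some} $P\in\{P_1,\ldots,P_{n-1}\}$, which is case (3). These two alternatives are mutually exclusive (a local transformation of some $P_j$ is not a global transformation of that same $P_j$) and jointly exhaustive, and neither meets case (1) since there $n=1$; this establishes the ``exactly one'' claim.

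The hard part will be the passage between ``not covered,'' which Definition~\ref{definition:plan-coverage} states through $\tree(\cdot)$, and ``global transformation,'' which is stated through \emph{unordered} joins. I would resolve it exactly as the proof of Corollary~\ref{corollary:local-transformation} implicitly does, reading coverage so that sharing the unordered join set determines coverage; a witnessing uncovered join then certifies distinct unordered join sets, which is what the backbone argument needs. A secondary point worth stating explicitly is that ``$P_i$ is a global transformation of $P_j$ with $j<i$'' should be read as quantified over all $j<i$, which is precisely what the backbone delivers.
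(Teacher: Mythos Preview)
Your proposal is correct and follows essentially the same route as the paper: use the contrapositive of Theorem~\ref{theorem:convergence} (equivalently Corollary~\ref{corollary:local-transformation}, which you also mention) to force every interior plan $P_i$ with $i\leq n-1$ to be a global transformation of all earlier ones, then split on whether $P_n$ is global with respect to all predecessors or local with respect to some. Your write-up is in fact more explicit than the paper's about mutual exclusivity and about the ordered-vs-unordered join subtlety in passing from ``not covered'' to ``global transformation''; the only thing the paper adds that you omit is a brief remark that each of the three cases is actually realizable, which is not needed for the classification statement as phrased.
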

That is, when the procedure does not terminate trivially (Case (1)), it can be characterized as a sequence of global transformations with \emph{at most} one more local transformation before its termination (Case (2) or (3)). Figure~\ref{fig:sequence-of-transformations} illustrates the latter two nontrivial cases. A formal proof of Theorem~\ref{theorem:sequence-transformation} is included in Appendix~\ref{appendix:sec:proofs:theorem:sequence-transformation}.

\begin{figure}
\centering
\includegraphics[width=0.9\columnwidth]{./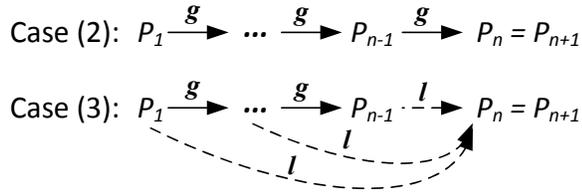}
\caption{Characterization of the re-optimization procedure ($g$ and $l$ stand for global and local transformations, respectively). For ease of illustration, $P_i$ is only noted as a global transformation of $P_{i-1}$, but we should keep in mind that $P_i$ is also a global transformation of all the $P_j$'s with $j < i$.}
\label{fig:sequence-of-transformations}
\shrink
\end{figure}

{\HL
One might raise the question here that why we separate Case (3) from Case (2) in Theorem~\ref{theorem:sequence-transformation}.
Why would the fact that there might be one local transformation be interesting?
The interesting point here is not that we may have one local transformation.
Rather, it is that local transformation can occur at most once, and, if it occurs, it must be the last one in the transformation chain. Actually, we have observed in our experiments that re-optimization of some queries only involves (one) local transformation.
This, however, by no means suggests that local transformations do not lead to any significant optimizations.
For instance, sometimes even just replacing an index-based nested loop join by a hash join (or vice versa) can result in nontrivial performance improvement.
}

\subsection{Efficiency}\label{sec:analysis:efficiency}

We are interested in how fast the re-optimization procedure terminates. As pointed out by Theorem~\ref{theorem:sequence-transformation}, the convergence speed depends on the number of \emph{global} transformations the procedure undergoes. In the following, we first develop a probabilistic model that will be used in our analysis, and then present analytic results for the general case and several important special cases.

\subsubsection{A Probabilistic Model}

Consider a queue of $N$ balls. Originally all balls are not marked. We then conduct the following procedure:

\begin{procedure}\label{procedure:random-ball-model}
In each step, we pick the ball at the head of the queue. If it is marked, then the procedure terminates. Otherwise, we mark it and randomly insert it back into the queue: the probability that the ball will be inserted at the position $i$ ($1\leq i \leq N$) is uniformly $1/N$.
\end{procedure}

We next study the expected number of steps that Procedure~\ref{procedure:random-ball-model} would take before its termination.

\begin{lemma}\label{lemma:expected-steps}
The expected number of steps that Procedure~\ref{procedure:random-ball-model} takes before its termination is:
\begin{equation}\label{eq:SN}
S_N=\sum_{k=1}^N k\cdot(1-\frac{1}{N})\cdots(1-\frac{k-1}{N})\cdot\frac{k}{N}.
\end{equation}
\end{lemma}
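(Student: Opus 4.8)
The plan is to recognize $S_N$ as the expectation $E[X]$, where $X$ counts the number of balls that get marked before the procedure halts (equivalently, the number of steps in which the head is an unmarked ball). Each non-terminating step draws an unmarked ball, marks it, and reinserts it, so after surviving the first $m$ steps exactly $m$ balls are marked. Writing $q_{m+1}$ for the probability that the head is marked at the start of step $m+1$ given survival through step $m$, the procedure halts with $X=k$ precisely when it survives steps $1,\ldots,k$ and then draws a marked ball, giving $P(X=k)=\left(\prod_{m=0}^{k-1}(1-q_{m+1})\right)q_{k+1}$. Thus the entire computation reduces to identifying the single quantity $q_{m+1}$.

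The key lemma, and the main obstacle, is to show $q_{m+1}=m/N$, i.e. that the head is marked with probability equal to the fraction of marked balls, exactly as if the head were a uniformly random ball. I would establish this through a stronger structural invariant: conditioned on surviving the first $m$ steps, the set of positions occupied by the $m$ marked balls is uniformly distributed over all $\binom{N}{m}$ size-$m$ subsets of $\{1,\ldots,N\}$. Granting the invariant, $P(\text{position }1\text{ marked})=\binom{N-1}{m-1}/\binom{N}{m}=m/N$, which is exactly $q_{m+1}$, and simultaneously yields the conditional survival probability $1-m/N$.

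I would prove the invariant by induction on $m$. The base case $m=1$ is immediate, since the lone marked ball is reinserted at a uniform position. For the inductive step, I condition on surviving step $m+1$: the head must then be unmarked, so the remaining $m$ marked positions are uniform over subsets of $\{2,\ldots,N\}$, and removing the head and relabeling turns them into a uniform size-$m$ subset of $\{1,\ldots,N-1\}$. The crux is a counting argument for the reinsertion: inserting the freshly marked ball at a uniform position $i\in\{1,\ldots,N\}$ into a uniform size-$m$ subset of $\{1,\ldots,N-1\}$ yields a \emph{uniform} size-$(m+1)$ subset of $\{1,\ldots,N\}$. This holds because each target subset $B$ of size $m+1$ is produced by exactly $m+1$ of the $\binom{N-1}{m}\cdot N$ equally likely (subset, position) pairs, one for each choice of which element of $B$ is the newly inserted ball, matching the required multiplicity $\binom{N-1}{m}N/\binom{N}{m+1}=m+1$.

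Substituting $q_{m+1}=m/N$ then gives $P(X=k)=\left(\prod_{m=0}^{k-1}(1-m/N)\right)\frac{k}{N}=(1-\frac{1}{N})\cdots(1-\frac{k-1}{N})\cdot\frac{k}{N}$, and $E[X]=\sum_{k=1}^N k\,P(X=k)$ is exactly $S_N$, with the range $1\le k\le N$ reflecting that at most $N$ balls can ever be marked. Beyond the counting identity, which is the genuine heart of the argument, I expect the only delicate point to be the bookkeeping that distinguishes the start-of-step state from the post-survival state when pushing the invariant through one iteration.
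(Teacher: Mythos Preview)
Your proposal is correct and follows the same overall decomposition as the paper: both compute $\Pr(X=k)$ as the product of conditional survival probabilities $1-j/N$ times the terminating probability $k/N$, then take $E[X]=\sum k\Pr(X=k)$. The difference is one of rigor at the single point that matters. The paper simply asserts that ``since a marked ball is uniformly placed at any position in the queue, the probability that a ball is marked after $k$ steps is $k/N$,'' treating the conditional-on-survival hazard $q_{k+1}=k/N$ as self-evident. You instead prove this via the stronger invariant that the marked positions form a uniform $m$-subset of $\{1,\dots,N\}$ after $m$ survivals, and your counting argument for why uniform reinsertion preserves this uniformity is exactly the missing justification---each target $(m{+}1)$-subset arises from $m{+}1$ equally likely (subset, position) pairs, matching $\binom{N-1}{m}N/\binom{N}{m+1}=m{+}1$. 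So your route is the paper's route with the key probabilistic claim actually proved rather than asserted; the paper's version is shorter, yours is airtight.
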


We include a proof of Lemma~\ref{lemma:expected-steps} in Appendix~\ref{appendix:sec:proofs:lemma:expected-steps}.
We can further show that $S_N$ is upper-bounded by $O(\sqrt{N})$.

\begin{theorem}\label{theorem:SN-upper-bound}
The expected number of steps $S_N$ as presented in Lemma~\ref{lemma:expected-steps} satisfies $S_N = O(\sqrt{N})$.
\end{theorem}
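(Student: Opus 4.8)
The plan is to recognize $S_N$ as a discrete approximation to a Gaussian integral and to obtain the $O(\sqrt{N})$ bound by dominating the product term with an exponential. Rewriting the summand, we have
$$S_N=\sum_{k=1}^N \frac{k^2}{N}\prod_{j=1}^{k-1}\Bigl(1-\frac{j}{N}\Bigr),$$
where the empty product at $k=1$ equals $1$ and every factor stays positive since $j\le k-1\le N-1$. The first step is to apply the elementary inequality $1-x\le e^{-x}$ to each factor, giving
$$\prod_{j=1}^{k-1}\Bigl(1-\frac{j}{N}\Bigr)\le \exp\Bigl(-\sum_{j=1}^{k-1}\frac{j}{N}\Bigr)=\exp\Bigl(-\frac{(k-1)k}{2N}\Bigr).$$
This replaces the awkward product by a clean Gaussian-type weight, which is exactly the shape that makes a $\sqrt{N}$ scaling emerge.

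Next I would clean up the exponent so that it depends on $k^2$ alone. Writing $(k-1)k=k^2-k$ and using $k\le N$, we get $\exp(k/(2N))\le e^{1/2}$, hence
$$S_N\le \sqrt{e}\sum_{k=1}^N \frac{k^2}{N}\exp\Bigl(-\frac{k^2}{2N}\Bigr).$$
Now the summand is $g(k)$ with $g(t)=\tfrac{t^2}{N}e^{-t^2/(2N)}$, and the natural rescaling $u=t/\sqrt{N}$ turns $g(t)\,dt$ into $u^2 e^{-u^2/2}\sqrt{N}\,du$. The remaining task is to show $\sum_{k=1}^N g(k)=O(\sqrt{N})$, which I would do by comparing the sum to $\int_0^N g(t)\,dt$ and evaluating
$$\int_0^N \frac{t^2}{N}e^{-t^2/(2N)}\,dt=\sqrt{N}\int_0^{\sqrt{N}} u^2 e^{-u^2/2}\,du\le \sqrt{N}\int_0^\infty u^2 e^{-u^2/2}\,du=\sqrt{N}\cdot\sqrt{\tfrac{\pi}{2}},$$
a constant multiple of $\sqrt{N}$. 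Combining the pieces yields $S_N=O(\sqrt{N})$.

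The one step requiring genuine care — the main obstacle — is the sum-to-integral comparison, because $g$ is not monotone: it increases up to its peak at $t=\sqrt{2N}$ (which lies inside $[0,N]$ for large $N$) and decreases afterward, so a naive ``sum $\le$ integral'' bound is not valid. I would handle this by exploiting unimodality: split the range at the peak and bound the increasing part by the integral over the same interval shifted by one, and the decreasing part analogously, so that the total sum is at most $\int_0^N g(t)\,dt$ plus one extra copy of the maximum value $\max_t g(t)=2/e=O(1)$. Since both the integral and the correction term are $O(\sqrt{N})$ (the latter is in fact $O(1)$), the estimate goes through. It is worth noting that the same substitution shows the integral approximation is tight, so in fact $S_N=\Theta(\sqrt{N})$ with leading constant $\sqrt{\pi/2}$; only the upper bound is needed here.
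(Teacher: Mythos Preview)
Your proof is correct and takes a genuinely different route from the paper's. The paper establishes a preliminary lemma showing that the summand $X_k$ satisfies $X_k=O(e^{-N^{2\epsilon}})$ whenever $k\ge N^{1/2+\epsilon}$, proved via Stirling's approximation for $\ln(N!)-\ln((N-k)!)$ followed by a Taylor expansion of $\ln(1+\tfrac{k}{N-k})$, together with a separate monotonicity check ($X_{k+1}<X_k$ for $k\ge N/2$). It then splits the sum at $k\approx\sqrt{N}$: for small $k$ the crude bound $X_k\le 1$ gives $\sum_{k<\sqrt{N}}X_k\le\sqrt{N}$, and the tail is $o(\sqrt{N})$ by the lemma. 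Your argument instead bounds the product directly by $e^{-k(k-1)/(2N)}$ via $1-x\le e^{-x}$, reducing $S_N$ to a Gaussian-weighted sum that is handled by a single sum-to-integral comparison with a unimodality correction. Your approach is more elementary (no Stirling, no asymptotic expansions) and as a bonus yields the sharp leading constant $\sqrt{\pi/2}$, whereas the paper's splitting argument only gives the order of growth; the paper's route, on the other hand, isolates the decay rate of individual terms more explicitly, which could be useful if one wanted finer tail information.
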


The proof of Theorem~\ref{theorem:SN-upper-bound} is in Appendix~\ref{appendix:sec:proofs:theorem:SN-upper-bound}.
In Figure~\ref{fig:sN}, we plot $S_N$ by increasing $N$ from $1$ to $1000$.
As we can see, the growth speed of $S_N$ is very close to that of $f(N)=\sqrt{N}$.

\begin{figure}
\centering
\includegraphics[width=\columnwidth]{./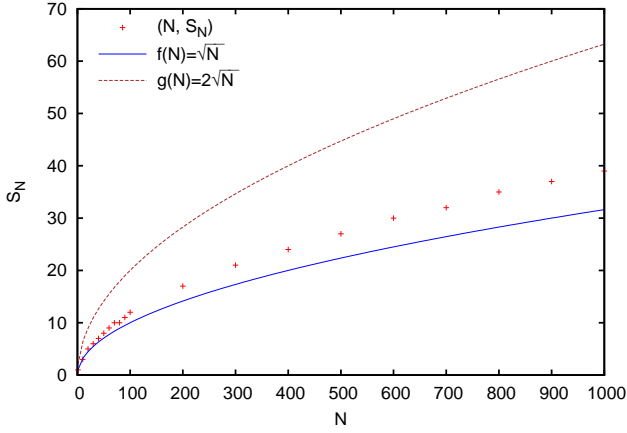}
\caption{$S_N$ with respect to the growth of $N$.}
\label{fig:sN}
\shrink
\end{figure}

%

\subsubsection{The General Case}

In a nutshell, query optimization can be thought of as picking a plan with the lowest estimated cost among a number of candidates. Different query optimizers have different search spaces, so in general we can only assume a search space with $N$ different join trees that will be considered by an optimizer.\footnote{By ``different'' join trees, we mean join trees that are global transformations of each other. We use this convention in the rest of the paper unless specific clarifications are noted.} Let these trees be $T_1$, ..., $T_N$, ordered by their estimated costs. The re-optimization procedure can then be thought of as shuffling these trees based on their refined cost estimates. This procedure terminates whenever (or even before) the tree with lowest estimated cost reoccurs, that is, when some tree appears at the head position for the second time. Therefore, the probabilistic model in the previous section applies here. As a result, by Lemma~\ref{lemma:expected-steps}, the expected number of steps for this procedure to terminate is $S_N$. We formalize this result as the following theorem:

\begin{theorem} \label{theorem:efficiency}
Assume that the position of a plan (after sampling-based validation) in the ordered plans with respect to their costs is uniformly distributed.
Let $N$ be the number of different join trees in the search space.
\footnote{{\HL
Theoretically $N$ could be as large as $O(2^m)$ where $m$ is the number of join operands, assuming that the optimizer uses the bottom-up dynamic programming search strategy.
Nonetheless, this may not always be the case.
For example, some optimizers use the Cascades framework that leverages a top-down search strategy~\cite{Graefe95a}.
An optimizer may even use different search strategies for different types of queries.
For example, PostgreSQL will switch from the dynamic programming search strategy to a randomized search strategy based on genetic algorithm, when the number of joins exceeds a certain threshold (12 by default)~\cite{geqo}.
For this reason, we choose to characterize the complexity of our algorithm in terms of $N$ rather than $m$.
}}
The expected number of steps before the re-optimization procedure terminates is then $S_N$, where $S_N$ is computed by Equation~\ref{eq:SN}.
Moreover, $S_N=O(\sqrt{N})$ by Theorem~\ref{theorem:SN-upper-bound}.
\end{theorem}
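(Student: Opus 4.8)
The plan is to prove Theorem~\ref{theorem:efficiency} by exhibiting a reduction from the re-optimization procedure to the abstract random ball model of Procedure~\ref{procedure:random-ball-model}. The quantitative work has already been done: Lemma~\ref{lemma:expected-steps} evaluates the expected number of steps of that model as $S_N$, and Theorem~\ref{theorem:SN-upper-bound} shows $S_N = O(\sqrt{N})$. Hence the only thing left to establish is that, under the stated uniformity hypothesis, the dynamics of Algorithm~\ref{alg:reoptimization} are captured by the ball model; once that coupling is in place, the expected number of rounds is bounded by $S_N$ and is therefore $O(\sqrt{N})$.

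First I would fix the correspondence between objects. Identify each of the $N$ distinct join trees $T_1,\ldots,T_N$ (distinct in the sense of being pairwise global transformations) with one ball. Because the optimizer always emits the plan of least estimated cost, the ``head of the queue'' is identified with the currently cheapest tree, and I would invoke Theorem~\ref{theorem:sequence-transformation} to guarantee that the plans $P_1,\ldots,P_{n-1}$ produced before termination are pairwise global transformations, i.e.\ pairwise distinct balls. Running sampling-based validation on the selected plan corresponds to \emph{marking} its ball; the validation refines that tree's cost and thereby relocates it in the cost ordering, and the theorem's hypothesis that the post-validation position is uniformly distributed over the $N$ slots is exactly the $1/N$ re-insertion rule of Procedure~\ref{procedure:random-ball-model}. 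I would then match the termination events: by Theorem~\ref{theorem:sequence-transformation}, the procedure ends either when a freshly validated tree returns to the head (Case (2)) or when the optimizer selects a local transformation of an already-validated tree (Case (3)); in ball-model terms both events are precisely ``the ball now at the head is already marked,'' at which point Theorem~\ref{theorem:convergence} guarantees that $\Gamma$ gains nothing and Algorithm~\ref{alg:reoptimization} halts.

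The step I expect to be the main obstacle is reconciling the ball model's termination rule with the fact that Algorithm~\ref{alg:reoptimization} may stop \emph{strictly earlier}. Coverage (Definition~\ref{definition:plan-coverage}) is, as noted after Theorem~\ref{theorem:convergence}, sufficient but not necessary, and it is join-wise rather than tree-wise: a newly selected plan can be a genuine global transformation of every predecessor (a fresh, unmarked ball) yet still be covered because each of its individual joins already appears in some earlier tree, forcing termination before any ball revisits the head. The ball model does not see these early stops, so it can only \emph{over-count} the number of rounds. I would therefore argue the coupling in the direction of domination: run both processes on the same sequence of uniform re-insertions and show that the real procedure terminates no later than the ball model, so its expected number of rounds is at most the ball-model expectation $S_N$. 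Combined with Corollary~\ref{corollary:termination} (which already secures finiteness, since each distinct tree is marked at most once and there are only $N$ of them), this upgrades mere termination to the quantitative worst-case bound, and Theorem~\ref{theorem:SN-upper-bound} then yields $S_N = O(\sqrt{N})$, completing the proof.
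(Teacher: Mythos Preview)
Your proposal is correct and follows essentially the same route as the paper: identify the $N$ join trees with balls, the optimizer's choice with the head of the cost-ordered queue, sampling-based validation with marking-and-reinsertion under the uniformity hypothesis, and then invoke Lemma~\ref{lemma:expected-steps} and Theorem~\ref{theorem:SN-upper-bound}. The paper's argument is the short paragraph immediately preceding the theorem, and your reduction is a more explicit version of that paragraph.

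One point worth noting: you are more careful than the paper on the domination issue. The paper simply remarks that the procedure ``terminates whenever (or even before) the tree with lowest estimated cost reoccurs'' and then asserts that the expected number of steps \emph{is} $S_N$. You correctly observe that early termination (via join-wise coverage that does not correspond to any ball revisiting the head) means the ball model only upper-bounds the real process, so what one actually gets is an expectation \emph{at most} $S_N$. That is the right reading; the paper's phrasing of the conclusion as an equality is slightly loose, and your coupling argument is the honest way to state it.
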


We emphasize that the analysis here only targets worst-case performance, which might be too pessimistic. This is because Procedure~\ref{procedure:random-ball-model} only simulates the Case (3) stated in Theorem~\ref{theorem:sequence-transformation}, which is the worst one among the three possible cases. In our experiments, we have found that all queries we tested require less than 10 rounds of iteration, most of which require only 1 or 2 rounds.

\blankline

\remark{The uniformity assumption in Theorem~\ref{theorem:efficiency} may not be valid in practice. It is possible that a plan after sampling-based validation (or, a marked ball in terms of Procedure~\ref{procedure:random-ball-model}) is more likely to be inserted into the front/back half of the queue. Such cases imply that, rather than with an equal chance of overestimation or underestimation, the optimizer tends to overestimate/underestimate the costs of all query plans (for a particular query). This is, however, not impossible.
In practice, significant cardinality estimation errors usually appear locally and propagate upwards. Once the error at some join is corrected, the errors in all plans that contain that join will also be corrected. In other words, the correction of the error at a single join can lead to the correction of errors in many candidate plans. In Appendix~\ref{appendix:sec:analysis}, we further present analysis for two extreme cases: all local errors are overestimates/underestimates.
To summarize, for left-deep join trees, we have the following two results:
\begin{itemize}
\item If all local errors are overestimates, then in the worst case the re-optimization procedure will terminate in at most $m+1$ steps, where $m$ is the number of joins contained in the query.
\item If all local errors are underestimates, then in the worst case re-optimization is expected to terminate in $S_{N/M}$ steps. Here $N$ is the number of different join trees in the optimizer's search space and $M$ is the number of edges in the join graph.
\end{itemize}
Note that both results are better than the bound stated in Theorem~\ref{theorem:efficiency}.
For instance, in the underestimation-only case, if $N=1000$ and $M=10$, we have $S_N=39$ but $S_{N/M}=12$.

{\HL
However, in reality, overestimates and underestimates may coexist. For left-deep join trees, by following the analysis in Appendix~\ref{appendix:sec:analysis}, we can see that such cases sit in between the two extreme cases.
Nonetheless, an analysis for plans beyond left-deep trees (e.g., bushy trees) seems to be challenging.
We leave this as one possible direction for future work.}
}


%

\subsection{Optimality of the Final Plan} \label{sec:analysis:optimality}

We can think of the re-optimization procedure as progressive adjustments of the optimizer's direction when it explores its search space.
The search space depends on the algorithm or strategy used by the optimizer. So does the impact of re-optimization.
But we can still have some general conclusions about the optimality of the final plan regardless of the search space.

\begin{assumption}\label{assumption:cost-function}
The cost estimates of plans using sampling-based cardinality refinement are consistent. That is, for any two plans $P_1$ and $P_2$, if $\cost^s(P_1) < \cost^s(P_2)$, then $\cost^a(P_1) < \cost^a(P_2)$. Here, $\cost^s(P)$ and $\cost^a(P)$ are the estimated cost based on sampling and the actual cost of plan $P$, respectively.
\end{assumption}

We have the following theorem based on Assumption~\ref{assumption:cost-function}. The proof is included in Appendix~\ref{appendix:sec:proofs:theorem:local-optimality}.

\begin{theorem}\label{theorem:local-optimality}
Let $P_1$, ..., $P_n$ be a series of plans generated during the re-optimization procedure. Then $\cost^s(P_n)\leq \cost^s(P_i)$, and thus, by Assumption~\ref{assumption:cost-function}, it follows that $\cost^a(P_n)\leq\cost^a(P_i)$, for $1\leq i \leq n - 1$.
\end{theorem}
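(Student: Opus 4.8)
The plan is to exploit two structural facts about Algorithm~\ref{alg:reoptimization}: that the validated-statistics set $\Gamma$ grows monotonically, and that at every round the optimizer returns a cost-minimizing plan under the current $\Gamma$. First I would fix notation by writing $\Gamma_i$ for the contents of $\Gamma$ at the start of round $i$, so that $\Gamma_1=\emptyset$ and $\Gamma_{i+1}=\Gamma_i\cup\Delta_i$, where $\Delta_i$ holds the sampling-based cardinalities of \emph{every} join appearing in $P_i$ (line 9). Let $\cost_\Gamma(P)$ denote the cost the optimizer assigns to $P$ when it reads cardinalities from $\Gamma$ (falling back to histograms for joins absent from $\Gamma$). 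The single observation driving everything is that once $P_i$ has been validated, all of its joins live in $\Gamma$ forever after, so $\cost_{\Gamma_j}(P_i)=\cost^s(P_i)$ for every $j\ge i+1$. By unrolling the recurrence, $\Gamma_{n+1}=\Delta_1\cup\cdots\cup\Delta_n$ contains the sampled cardinalities of \emph{all} joins occurring in \emph{all} of $P_1,\ldots,P_n$.

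The crux is to evaluate everything at the terminal round rather than during the iteration. By hypothesis the procedure halts at step $n+1$ with $P_{n+1}=P_n$, which means that in round $n+1$ the optimizer, reading $\Gamma_{n+1}$, selected $P_n$ as a minimum-cost plan over its search space. Since $P_1,\ldots,P_{n-1}$ were themselves produced by the optimizer, they belong to that search space, so $\cost_{\Gamma_{n+1}}(P_n)\le\cost_{\Gamma_{n+1}}(P_i)$ for every $1\le i\le n-1$. Now I would invoke the observation above: because every join of $P_n$ and every join of each $P_i$ is covered by $\Gamma_{n+1}$, both sides collapse to sampling costs, giving $\cost^s(P_n)=\cost_{\Gamma_{n+1}}(P_n)\le\cost_{\Gamma_{n+1}}(P_i)=\cost^s(P_i)$. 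Applying Assumption~\ref{assumption:cost-function} then transfers this ordering to actual costs, yielding $\cost^a(P_n)\le\cost^a(P_i)$.

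I expect the main obstacle to be the very pitfall flagged by Example~\ref{ex:joins-not-covered}: during the run an intermediate plan $P_{i+1}$ may introduce a join (such as $C\bowtie D$) that $\Gamma_{i+1}$ does not yet cover, so $\cost_{\Gamma_{i+1}}(P_{i+1})$ is \emph{not} the true sampling cost of $P_{i+1}$, and a naive round-by-round comparison of costs would be comparing quantities computed under different statistics. The resolution is precisely to defer all comparisons to the final round $n+1$, where the monotone accumulation of $\Gamma$ guarantees that every previously generated plan is fully validated; this is the only point at which $\cost_{\Gamma_{n+1}}$ coincides with $\cost^s$ on all of $P_1,\ldots,P_n$ simultaneously. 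A minor loose end is that Assumption~\ref{assumption:cost-function} is stated with strict inequalities, so to reach the non-strict conclusion $\cost^a(P_n)\le\cost^a(P_i)$ I would read consistency as preserving the weak order, or equivalently argue by contraposition that $\cost^a(P_n)>\cost^a(P_i)$ would force $\cost^s(P_n)>\cost^s(P_i)$, contradicting the chain just derived.
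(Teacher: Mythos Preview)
Your proposal is correct and follows essentially the same argument as the paper: both proofs observe that by the terminal round all of $P_1,\ldots,P_n$ have their joins validated in $\Gamma$, so the optimizer's choice of $P_n$ at that point directly yields $\cost^s(P_n)\le\cost^s(P_i)$. Your version is more carefully formalized (the $\Gamma_j$ notation, the explicit monotonicity, the discussion of why intermediate-round comparisons would fail, and the strict-vs-weak inequality remark), but the underlying idea is identical to the paper's short proof.
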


That is, the plan after re-optimization is guaranteed to be better than the original plan. Nonetheless, it is difficult to conclude that the plans are improved monotonically, namely, in general it is not true that $\cost^s(P_{i+1})\leq\cost^s(P_i)$, for $1\leq i\leq n - 1$. However, we can prove that this is true if we only have overestimates (proof in Appendix~\ref{appendix:sec:proofs:corollary:local-optimality:overestimates}).

\begin{corollary} \label{corollary:local-optimality:overestimates}
Let $P_1$, ..., $P_n$ be a series of plans generated during the re-optimization procedure. If in the re-optimization procedure only overestimates occur, then $\cost^s(P_{i+1})\leq\cost^s(P_i)$ for $1\leq i\leq n-1$.
\end{corollary}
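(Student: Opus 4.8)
The plan is to track, for each round, exactly which join cardinalities the optimizer is using when it produces a plan, and then compare costs across rounds via the optimizer's own optimality. For a plan $P$ and a set $\Gamma$ of sampling-validated joins, write $\cost^{\Gamma}(P)$ for the estimated cost of $P$ computed by using the sampling cardinality for every join of $P$ that appears in $\Gamma$ and the optimizer's (histogram) estimate for every other join. Let $\Gamma_i=\Delta_1\cup\cdots\cup\Delta_{i-1}$ be the contents of $\Gamma$ at the moment the optimizer returns $P_i$ (so $\Gamma_1=\emptyset$ and $\Gamma_{i+1}=\Gamma_i\cup\Delta_i$). The first fact I would record is that $\cost^s(P_i)=\cost^{\Gamma_{i+1}}(P_i)$: since $\Delta_i$ validates every join occurring in $P_i$ and $\Delta_i\subseteq\Gamma_{i+1}$, the quantity $\cost^{\Gamma_{i+1}}(P_i)$ already uses sampling cardinalities for all of $P_i$'s joins, which is by definition $\cost^s(P_i)$.

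Next I would prove the key monotonicity statement: under the overestimate-only assumption, $\cost^{\Gamma'}(P)\leq\cost^{\Gamma}(P)$ whenever $\Gamma\subseteq\Gamma'$. The argument is that enlarging $\Gamma$ only replaces, for the joins of $P$ newly covered by $\Gamma'\setminus\Gamma$, the optimizer's histogram estimate by its sampling estimate; since all errors are overestimates, each such replacement can only lower the cardinality used, and a standard cost model is monotonically nondecreasing in its input cardinalities, so the estimated cost of the fixed plan $P$ cannot increase. Joins of $P$ already in $\Gamma$, and validated joins that do not occur in $P$, leave $\cost^{\Gamma}(P)$ unchanged.

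With these two facts the corollary follows by a short chain. At round $i+1$ the optimizer minimizes $\cost^{\Gamma_{i+1}}(\cdot)$ over its search space, and $P_i$ is a feasible plan in that space, so $\cost^{\Gamma_{i+1}}(P_{i+1})\leq\cost^{\Gamma_{i+1}}(P_i)$. Applying monotonicity to the fixed plan $P_{i+1}$ with $\Gamma_{i+1}\subseteq\Gamma_{i+1}\cup\Delta_{i+1}$, and using the first observation at both ends, gives
$$\cost^s(P_{i+1})=\cost^{\Gamma_{i+1}\cup\Delta_{i+1}}(P_{i+1})\leq\cost^{\Gamma_{i+1}}(P_{i+1})\leq\cost^{\Gamma_{i+1}}(P_i)=\cost^s(P_i),$$
which is the desired inequality for $1\leq i\leq n-1$.

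I expect the main obstacle to be the monotonicity step, specifically making precise and justifying that validating additional joins can only decrease a plan's estimated cost. This is exactly where the overestimate-only hypothesis is indispensable: with mixed over- and under-estimates, a later validation could raise a previously underestimated cardinality, so enlarging $\Gamma$ need not lower $\cost^{\Gamma}(P)$, the first inequality in the chain breaks, and monotone improvement across rounds can genuinely fail (consistent with the remark preceding the corollary that the plans are not improved monotonically in general). Care is also needed to confirm that the cost model the optimizer consumes is monotone in its join cardinalities and that only the joins actually occurring in $P$ influence $\cost^{\Gamma}(P)$; both are mild, standard properties, but they should be stated explicitly so that the replacement argument is airtight.
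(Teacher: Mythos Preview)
Your proof is correct and follows essentially the same approach as the paper: use the optimizer's optimality at round $i{+}1$ to get $\cost^{\Gamma_{i+1}}(P_{i+1})\leq\cost^{\Gamma_{i+1}}(P_i)=\cost^s(P_i)$, then use the overestimate-only hypothesis to pass from $\cost^{\Gamma_{i+1}}(P_{i+1})$ down to $\cost^s(P_{i+1})$. Your formalization is in fact a bit more careful than the paper's, which writes the intermediate quantity as ``$\cost^o(P_{i+1})$'' and describes it as the original histogram-based estimate, glossing over the possibility that some joins of $P_{i+1}$ were already validated in earlier rounds; your $\cost^{\Gamma}(P)$ notation and explicit monotonicity lemma (which relies on Assumption~\ref{assumption:monotone}) handle this cleanly.
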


Our last result on the optimality of the final plan is in the sense that it is the best among all the plans that are local transformations of the final plan (proof in Appendix~\ref{appendix:sec:proofs:theorem:optimality-local-transformation}).

\begin{theorem}\label{theorem:optimality-local-transformation}
Let $P$ be the final plan the re-optimization procedure returns. For any $P'$ such that $P'$ is a local transformation of $P$, it holds that $\cost^s(P)\leq\cost^s(P')$.
\end{theorem}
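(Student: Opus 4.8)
The plan is to exploit the termination condition of Algorithm~\ref{alg:reoptimization} together with a coverage argument applied to cost rather than to termination. Since $P$ is the final plan, the procedure has terminated, so by lines 6--8 the optimizer, when invoked on the accumulated statistics $\Gamma$, returns $P$ once more. In particular $\tree(P)$ has already been validated by sampling, so $\Gamma$ records a sampling-refined cardinality for every logical join appearing in $\tree(P)$. Because the optimizer always picks the plan of minimum estimated cost, $P$ minimizes the optimizer's cost estimate, call it $\cost_\Gamma(\cdot)$, over its entire search space. My goal is to upgrade this to the sharper statement that $P$ minimizes $\cost^s(\cdot)$ over the (smaller) family of local transformations of $P$.

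The key step I would establish first is that, for any local transformation $P'$ of $P$, the optimizer's estimate under $\Gamma$ already coincides with the sampling-based cost $\cost^s(P')$. By the definition of a local transformation, $P'$ and $P$ share the same set of unordered logical joins; a local transformation only swaps left/right subtrees (and possibly changes physical join operators), so the set of base relations combined at each internal node is unchanged. Since join cardinality is symmetric, $|X\bowtie Y|=|Y\bowtie X|$, every intermediate cardinality needed to cost $P'$ is exactly one that was validated while costing $P$, hence is present in $\Gamma$. Consequently, when the optimizer costs $P'$ under $\Gamma$ it uses only sampling-refined cardinalities, giving $\cost_\Gamma(P')=\cost^s(P')$; the identical equality holds for $P$ itself. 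This is essentially the coverage observation behind Corollary~\ref{corollary:local-transformation}, now invoked to pin down the cost value rather than merely to force termination.

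With this in hand the conclusion is immediate. Since the optimizer returned $P$ under $\Gamma$, we have $\cost_\Gamma(P)\le\cost_\Gamma(P')$ for every candidate $P'$ it weighs, and substituting the two equalities from the previous step yields $\cost^s(P)=\cost_\Gamma(P)\le\cost_\Gamma(P')=\cost^s(P')$, which is exactly the claim.

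I expect the main obstacle to be a modelling subtlety rather than any computation: I must ensure that every local transformation $P'$ of $P$ is genuinely among the candidates the optimizer weighs under $\Gamma$, so that the optimality inequality $\cost_\Gamma(P)\le\cost_\Gamma(P')$ applies to it. This is reasonable, since local transformations are precisely the plans an optimizer enumerates once the set of unordered joins is fixed, as it varies join order and physical operators, but it does rely on the optimizer's enumeration being closed under such swaps. If one prefers not to assume this, the theorem can be read as quantifying only over those local transformations the optimizer actually considers, and the argument above goes through verbatim.
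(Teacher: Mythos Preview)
Your proposal is correct and follows essentially the same approach as the paper's proof: both observe that since $P$ has been validated, $\Gamma$ contains sampling-refined cardinalities for every join in $\tree(P)$, and since any local transformation $P'$ shares that same set of unordered joins, the optimizer knows $\cost^s(P)$ and $\cost^s(P')$ and must have preferred $P$. Your write-up is in fact a bit more careful than the paper's (you make the identity $\cost_\Gamma=\cost^s$ on local transformations explicit and flag the assumption that such $P'$ lie in the optimizer's search space), whereas the paper simply appeals to Theorem~\ref{theorem:sequence-transformation} to reach the same conclusion.
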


\subsection{Discussion}

{\HL
We call the final plan returned by the re-optimization procedure the \emph{fixed point} with respect to the initial plan generated by the optimizer. According to Theorem~\ref{theorem:local-optimality}, this plan is a \emph{local optimum} with respect to the initial plan. Note that, if $\mathcal{P} = \{P_1, ..., P_n\}$ covers the whole search space, that is, any plan $P$ in the search space is covered by $\mathcal{P}$, then the locally optimal plan is also globally optimal.
However, in general, it is difficult to give a definitive answer to the question that how far away the plan after re-optimization is from the \emph{true} optimal plan.
It depends on several factors, including the quality of the initial query plan, the search space covered by re-optimization, and the accuracy of the cost model and sampling-based cardinality estimates.
}

A natural question is then the impact of the initial plan. Intuitively, it seems that the initial plan can affect both the fixed point and the time it takes to converge to the fixed point. (Note that it is straightforward to prove that the fixed point must exist and be unique, with respect to the given initial plan.) There are also other related interesting questions. For example, if we start with two initial plans with similar cost estimates, would they converge to fixed points with similar costs as well? We leave all these problems as interesting directions for further investigation.

Moreover, the convergence of the re-optimization procedure towards a fixed point can also be viewed as a validation procedure of the costs of the plans $\mathcal{V}$ that can be covered by $\mathcal{P}=\{P_1, ..., P_n\}$. Note that $\mathcal{V}$ is a subset of the whole search space explored by the optimizer, and $\mathcal{V}$ is induced by $P_1$ --- the initial plan that is deemed as optimal by the optimizer. It is also interesting future work to study the relationship between $P_1$ and $\mathcal{V}$, especially how much of the complete search space can be covered by $\mathcal{V}$.

%

\section{Optimizer ``Torture Test''}\label{sec:benchmark}

Evaluating the effectiveness of a query optimizer is challenging.
As we mentioned in the introduction, query optimizers have to handle not only common cases but also difficult, corner cases.
{\HL
However, it is impossible to find well-known public examples of these corner-case queries and the data sets they run on.
Regarding this, in this section we create our own data sets and queries based on the well-known fact that many difficult queries are made difficult by correlation in the data.} 
We call it ``optimizer torture test'' (OTT), given that our goal is to sufficiently challenge the cardinality estimation approaches used by current query optimizers.
We next describe the details of OTT.

%

\subsection{Design of the Database and Queries}\label{sec:benchmark:design}


Since we target cardinality/selectivity estimation, we can focus on queries that only contain selections and joins.
In general, a selection-join query $q$ over $K$ relations $R_1$, ..., $R_K$ can be represented as
$$q=\sigma_{F}(R_1\bowtie\cdots\bowtie R_K),$$
where $F$ is a selection predicate as in relational algebra (i.e., a boolean formula).
Moreover, we can just focus on equality predicates, i.e., predicates of the form $A=c$ where $A$ is an attribute and $c$ is a constant. Any other predicate can be represented by unions of equality predicates. As a result, we can focus on $F$ of the form
$$F=(A_1=c_1)\land \cdots\land (A_K=c_K),$$
where $A_k$ is an attribute of $R_k$, and $c_k\in\Dom(A_k)$ ($1\leq k\leq K$). Here, $\Dom(A_k)$ is the domain of the attribute $A_k$.

Based on the above observations, our design of the database and queries is as follows:
\begin{enumerate} [(1)]
\item We have $K$ relations $R_1(A_1, B_1)$, ..., $R_K(A_K, B_K)$.
\item We use $A_k$'s for selections and $B_k$'s for joins.
\item Let $R'_k=\sigma_{A_k=c_k}(R_k)$ for $1\leq k\leq K$.
The queries of our benchmark are then of the form:
\begin{equation}\label{eq:query}
R'_1\bowtie_{B_1=B_2}R'_2\bowtie_{B_2=B_3}\cdots\bowtie_{B_{K-1}=B_K}R'_K.
\end{equation}
\end{enumerate}

The remaining question is how to generate data for $R_1$, ..., $R_K$ so that we can easily control the selectivities for the selection and join predicates. This requires us to consider the joint data distribution for $(A_1, ..., A_K, B_1, ..., B_K)$. A straightforward way could be to specify the contingency table of the distribution. However, there is a subtle issue of this approach: we cannot just generate a large table with attributes $A_1$, ..., $A_K$, $B_1$, ..., $B_K$ and then split it into different relations $R_1(A_1, B_1)$, ..., $R_K(A_K, B_K)$.
{\HL
The reason is that we cannot infer the joint distribution $(A_1, ..., A_K, B_1, ..., B_K)$ based on the (marginal) distributions we \emph{observed} on $(A_1, B_1)$, ..., $(A_K, B_K)$.
In Appendix~\ref{appendix:sec:distributions} we further provide a concrete example to illustrate this.
This discrepancy between the observed and true distributions calls for a new approach.
}

\subsection{The Data Generation Algorithm} \label{sec:benchmark:data-gen}

The previous analysis suggests that we can only generate data for each $R_k(A_k, B_k)$ separately and independently, without resorting to their joint distribution. To generate correlated data, we therefore have to make $A_k$ and $B_k$ correlated, for $1\leq k\leq K$.
Because our goal is to challenge the optimizer's cardinality estimation algorithm, we choose to go to the extreme of this direction: let $B_k$ be the same as $A_k$. Algorithm~\ref{alg:data-gen} presents the details of this idea.

\begin{algorithm}
  \SetAlgoLined
  \KwIn{$\Pr(A_k)$, the distribution of $A_k$, for $1\leq k\leq K$}
  \KwOut{$R_k(A_k,B_k)$: tables generated, for $1\leq k\leq K$}
  \SetAlgoLined
  \For{$1\leq k\leq K$} {
    Pick a seed independently for the random number generator\;
    Generate $A_k$ with respect to $\Pr(A_k)$\;
    Generate $B_k=A_k$\;
  }
  \Return{$R_k(A_k,B_k)$, $1\leq k\leq K$\;}
  \caption{Data generation for the OTT database}
\label{alg:data-gen}
\end{algorithm}

We are now left with the problem of specifying $\Pr(A_k)$. While $\Pr(A_k)$ could be arbitrary, we should reconsider our goal of sufficiently challenging the optimizer. We therefore need to know some details about how the optimizer estimates selectivities/cardinalities. Of course, different query optimizers have different implementations, but the general principles are similar. 
In the following, we present the specific technique used by PostgreSQL, which is used in our experimental evaluation in Section~\ref{sec:experiments}.

\subsubsection{PostgreSQL's Approaches} \label{sec:benchmark:data-gen:postgresql}

PostgreSQL stores the following three types of statistics for each attribute $A$ in its \verb|pg_stats| view~\cite{pg-stats}, if the \verb|ANALYZE| command is invoked for the database:
\begin{itemize}
\item the number of distinct values $n(A)$ of $A$;
\item most common values (MCV's) of $A$ and their frequency;
\item an equal-depth histogram for the other values of $A$ except for the MCV's.
\end{itemize}

The above statistics can be used to estimate the selectivity of a predicate over a single attribute in a straightforward manner.
For instance, for the predicate $A=c$ in our OTT queries, PostgreSQL first checks if $c$ is in the MCV's.
If $c$ is present, then the optimizer simply uses the (exact) frequency recorded.
Otherwise, the optimizer assumes a uniform distribution over the non-MCV's and estimates the frequency of $c$ based on $n(A)$.

The approach used by PostgreSQL to estimate selectivities for join predicates is more sophisticated. Consider an equal-join predicate $B_1=B_2$. If MCV's for either $B_1$ or $B_2$ are not available, then the optimizer uses an approach first introduced in System R~\cite{Selinger-AccessPath79} by estimating the reduction factor as $1/\max\{n(B_1),n(B_2)\}$. If, on the other hand, MCV's are available for both $B_1$ and $B_2$, then PostgreSQL tries to refine its estimate by first ``joining'' the two lists of MCV's. For skewed data distributions, this can lead to much better estimates because the join size of the MCV's, which is accurate, will be very close to the actual join size. Other database systems, such as Oracle~\cite{oracle-join-est}, use similar approaches.

To combine selectivities from multiple predicates, PostgreSQL relies on the well-known attribute-value-independence (AVI) assumption, which assumes that the distributions of values of different attributes are independent.

\subsubsection{The Distribution $\Pr(A_k)$ And Its Impact}

From the previous discussion we can see that whether $\Pr(A_k)$ is uniform or skewed will have little difference in affecting the optimizer's estimates if MCV's are leveraged, simply because MCV's have recorded the \emph{exact} frequency for those skewed values. We therefore can just let $\Pr(A_k)$ be uniform. We next analyze the impact of this decision by computing the differences between the estimated and actual cardinalities for the OTT queries.

Let us first revisit the OTT queries presented in Equation~\ref{eq:query}. Note that for an OTT query to be non-empty, the following condition must hold: $B_1=B_2=\cdots =B_{K-1}=B_K$. Because we have intentionally set $A_k=B_k$ for $1\leq k\leq K$, this then implies
\begin{equation} \label{eq:non-empty-condition}
A_1=A_2=\cdots =A_{K-1}=A_K.
\end{equation}
The query size can thus be controlled by the values of the $A$'s. The query is simply empty if Equation~\ref{eq:non-empty-condition} does not hold.
{\HL
In Appendix~\ref{appendix:analysis:ott}, we further present a detailed analysis of the query size when Equation~\ref{eq:non-empty-condition} holds.
To summarize, we are able to control the difference between the query sizes when Equation~\ref{eq:non-empty-condition} holds or not.
Therefore, we can make this gap as large as we wish.
However, the optimizer will give the same estimate of the query size regardless of if Equation~\ref{eq:non-empty-condition} holds or not.
In our experiments (Section~\ref{sec:experiments}) we further used this property to generate instance OTT queries.
}


\section{Experimental Evaluation} \label{sec:experiments}

We present experimental evaluation results of our proposed re-optimization procedure in this section.

\subsection{Experimental Settings} \label{sec:experiments:settings}

We implemented the re-optimization framework in PostgreSQL 9.0.4. The modification to the optimizer is small, limited to several hundred lines of C code, which demonstrates the feasibility of including our framework into current query optimizers. We conducted our experiments on a PC with 2.4GHz Intel dual-core CPU and 4GB memory, and we ran PostgreSQL under Linux 2.6.18.

\subsubsection{Databases and Performance Metrics}

We used both the standard version and a skewed version~\cite{skewed-gen} of the TPC-H benchmark database, as well as our own OTT database described in Section~\ref{sec:benchmark}. 

\paragraph*{TPC-H Benchmark Databases}

We used TPC-H databases at the scale of 10GB in our experiments.
The generator for skewed TPC-H database uses a parameter $z$ to control the skewness of each column by generating Zipfian distributions.
The larger $z$ is, the more skewed the generated data are.
$z=0$ corresponds to a uniform distribution.
In our experiments, we generated a skewed database by setting $z=1$.

\paragraph*{OTT Database}

We created an instance of the OTT database in the following manner. We first generated a standard 1GB TPC-H database. We then extended the 6 largest tables (\emph{lineitem}, \emph{orders}, \emph{partsupp}, \emph{part}, \emph{customer}, \emph{supplier}) by adding two additional columns $A$ and $B$ to each of them. 
As discussed in Section~\ref{sec:benchmark:data-gen}, we populated the extra columns with uniform data.
The domain of a column is determined by the number of rows in the corresponding table: if the table contains $r$ rows, then the domain is $\{0, 1, ..., r / 100 - 1\}$. In other words, each distinct value in the domain appears roughly 100 times in the generated column.
We further created an index on each added column.

\paragraph*{Performance Metrics}

In our experiments, we measured the following performance metrics for each query on each database:
\begin{enumerate} [(1)]
\item the original running time of the query;
\item the number of iterations the re-optimization procedure requires before its termination;
\item the time spent on the re-optimization procedure;
\item the total query running time including the re-optimization time.
\end{enumerate}
Based on studies in the previous work~\cite{WuCZTHN13}, in all of our experiments we set the sampling ratio to be 0.05, namely, $5\%$ of the data were taken as samples.

\subsubsection{Calibrating Cost Models}

The previous work~\cite{WuCZTHN13} has also shown that, after proper calibration of the cost models used by the optimizer, we could have better estimates of query running times.
An interesting question is then: would calibration also improve query optimization?

In our experiments, we also investigated this problem.
Specifically, we ran the offline calibration procedure (details in~\cite{WuCZTHN13}) and replaced the default values of the five cost units (\emph{seq\_page\_cost}, \emph{random\_page\_cost}, \emph{cpu\_tuple\_cost}, \emph{cpu\_index\_tuple\_cost}, and\\ \emph{cpu\_operator\_cost}) in  \verb|postgresql.conf| (i.e., the configuration file of PostgreSQL server) with the calibrated values.
In the following, we also report results based on calibrated cost models.

\subsection{Results on the TPC-H Benchmark}\label{sec:experiments:tpch}

We tested 21 TPC-H queries. (We excluded Q15, which is not supported by our current implementation because it requires to create a view first.) For each TPC-H query, we randomly generated 10 instances. We cleared both the database buffer pool and the file system cache between each run of each query.

\subsubsection{Results on Uniform Database}

Figure~\ref{fig:tpch-skew0:time} presents the average running times and their standard deviations (as error bars) of these queries over the uniform database.

\begin{figure}[t]
\centering
\subfigure[Without calibration of the cost units]{ \label{fig:tpch-skew0:time:no-calib}
\includegraphics[width=\columnwidth]{./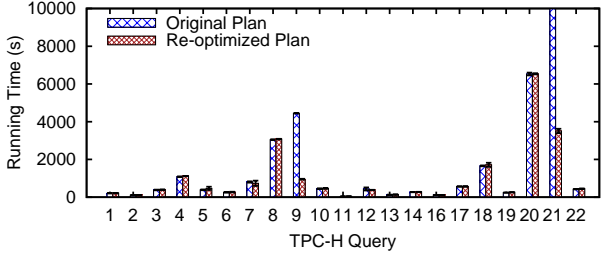}}
\subfigure[With calibration of the cost units]{ \label{fig:tpch-skew0:time:calib}
\includegraphics[width=\columnwidth]{./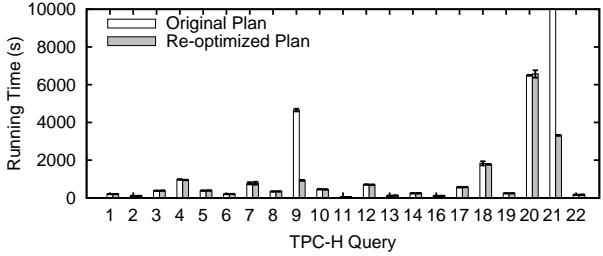}}
\caption{Query running time over uniform 10GB TPC-H database ($z=0$).}
\label{fig:tpch-skew0:time}
\shrink
\end{figure}

We have two observations. First, while the running times for most of the queries almost do not change, we can see significant improvement for some queries. For example, as shown in Figure~\ref{fig:tpch-skew0:time:no-calib}, even without calibration of the cost units, the average running time of Q9 drops from 4,446 seconds to only 932 seconds, a 79\% reduction; more significantly, the average running time of Q21 drops from 20,746 seconds (i.e., almost 6 hours) to 3,508 seconds (i.e., less than 1 hour), a 83\% reduction.

Second, calibration of the cost units can sometimes significantly reduce the running times for some queries. For example, comparing Figure~\ref{fig:tpch-skew0:time:no-calib} with Figure~\ref{fig:tpch-skew0:time:calib} we can observe that the average running time of Q8 drops from 3,048 seconds to only 339 seconds, a 89\% reduction, by just using calibrated cost units without even invoking the re-optimization procedure.

\begin{figure}
\centering
\includegraphics[width=\columnwidth]{./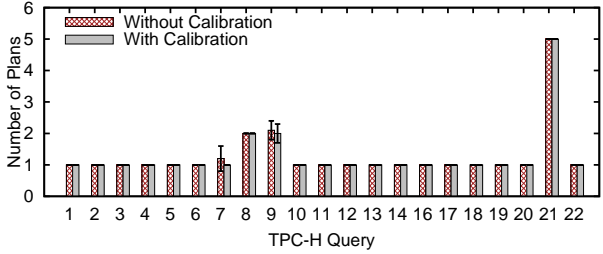}
\caption{The number of plans generated during re-optimization over uniform 10GB TPC-H database.}
\label{fig:tpch-skew0:num}
\shrink
\end{figure}

We further studied the re-optimization procedure itself. Figure~\ref{fig:tpch-skew0:num} presents the number of plans generated during re-optimization. It substantiates our observation in Figure~\ref{fig:tpch-skew0:time}: for the queries whose running times were not improved, the re-optimization procedure indeed picked the same plans as those originally chosen by the optimizer. Figure~\ref{fig:tpch-skew0:time-sampling} further compares the query running time excluding/including the time spent on re-optimization. For all the queries we tested, re-optimization time is ignorable compared to query execution time, which demonstrates the low overhead of our re-optimization procedure.

\begin{figure}[t]
\centering
\subfigure[Without calibration of the cost units]{ \label{fig:tpch-skew0:time-sampling:no-calib}
\includegraphics[width=\columnwidth]{./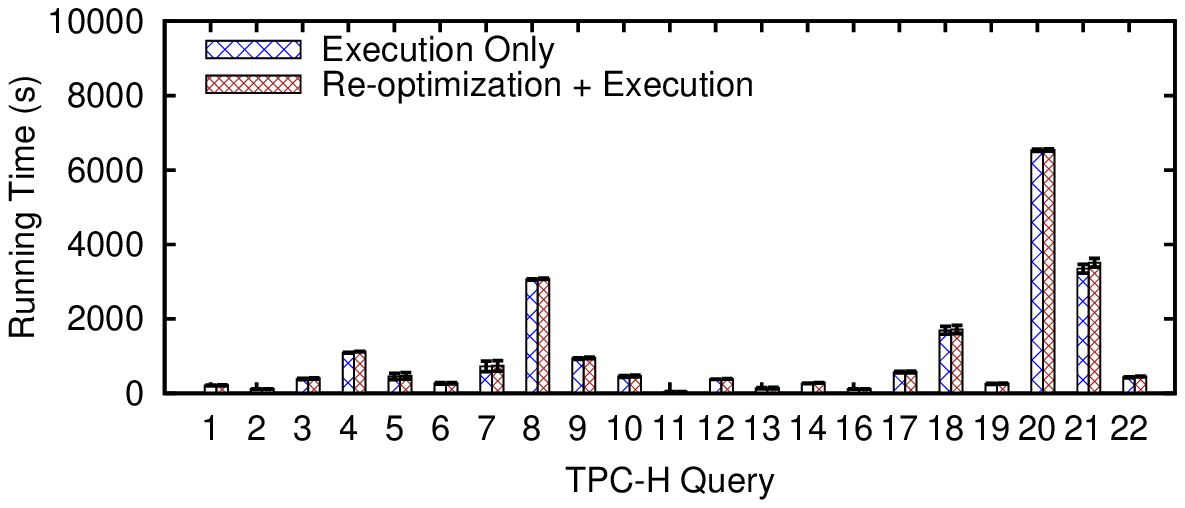}}
\subfigure[With calibration of the cost units]{ \label{fig:tpch-skew0:time-sampling:calib}
\includegraphics[width=\columnwidth]{./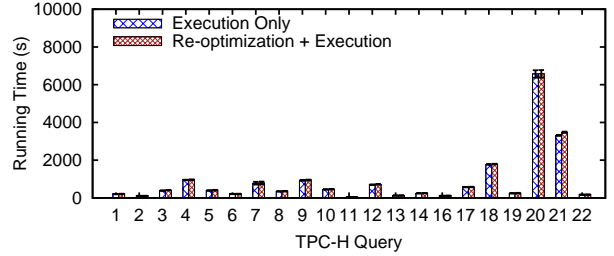}}
\caption{Query running time excluding/including re-optimization time over uniform 10GB TPC-H database ($z=0$).}
\label{fig:tpch-skew0:time-sampling}
\shrink
\end{figure}

\subsubsection{Results on Skewed Database}

On the skewed database, we have observed results similar to that on the uniform database.
Figure~\ref{fig:tpch-skew1:time} presents the running times of the queries,
with or without calibration of the cost units.\footnote{
\HL{We notice that Q17 in Figure~\ref{fig:tpch-skew1:time} has a large error bar.
The error bars represent variance due to different instances of the query template (different
constants in the query). Q17 therefore has a large variance, because we used a skewed TPC-H database.
The variance is much smaller when a uniform database is used (see Figure~\ref{fig:tpch-skew0:time}).}
}
While it looks quite similar to Figure~\ref{fig:tpch-skew0:time}, there is one interesting phenomenon not shown before.
In Figure~\ref{fig:tpch-skew1:time:no-calib} we see that, without using calibrated cost units, the average running times for Q8 and Q9 actually increase after re-optimization.
Recall that in Section~\ref{sec:analysis:optimality} we have shown the local optimality of the plan returned by the re-optimization procedure (Theorem~\ref{theorem:local-optimality}).
However, that result is based on the assumption that sampling-based cost estimates are consistent with actual costs (Assumption~\ref{assumption:cost-function}).
Here this seems not the case.
Nonetheless, after using calibrated cost units, both the running times of Q8 and Q9 were significantly improved (Figure~\ref{fig:tpch-skew1:time:calib}).

\begin{figure}[t]
\centering
\subfigure[Without calibration of the cost units]{ \label{fig:tpch-skew1:time:no-calib}
\includegraphics[width=\columnwidth]{./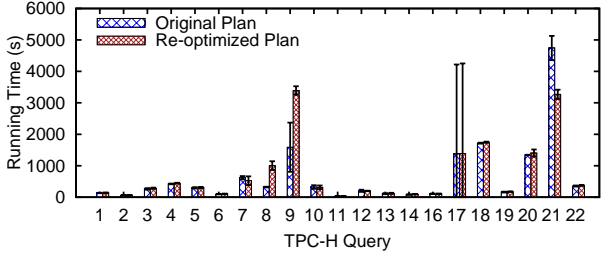}}
\subfigure[With calibration of the cost units]{ \label{fig:tpch-skew1:time:calib}
\includegraphics[width=\columnwidth]{./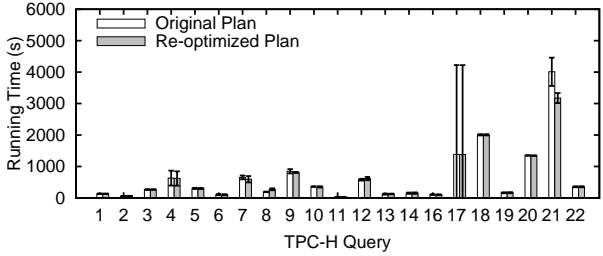}}
\caption{Query running time over skewed 10GB TPC-H database ($z=1$).}
\label{fig:tpch-skew1:time}
\shrink
\end{figure}


\begin{figure}
\centering
\includegraphics[width=\columnwidth]{./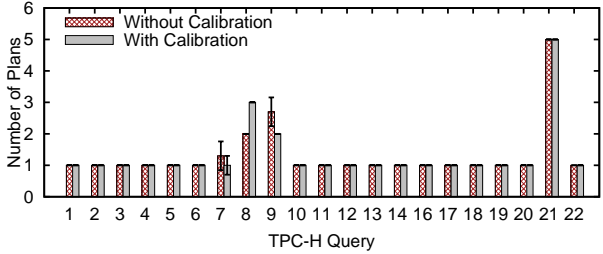}
\caption{The number of plans generated during re-optimization over skewed 10GB TPC-H database.}
\label{fig:tpch-skew1:num}
\shrink
\end{figure}


We further present the number of plans considered during re-optimization in Figure~\ref{fig:tpch-skew1:num}.
Note that re-optimization seems to be more active on skewed data.
Figure~\ref{fig:tpch-skew1:time-sampling} shows the running times excluding/including the re-optimization times of the queries.
Again, the additional overhead of re-optimization is trivial.

\begin{figure}[t]
\centering
\subfigure[Without calibration of the cost units]{ \label{fig:tpch-skew1:time-sampling:no-calib}
\includegraphics[width=\columnwidth]{./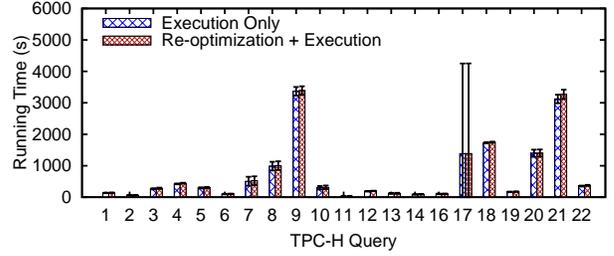}}
\subfigure[With calibration of the cost units]{ \label{fig:tpch-skew1:time-sampling:calib}
\includegraphics[width=\columnwidth]{./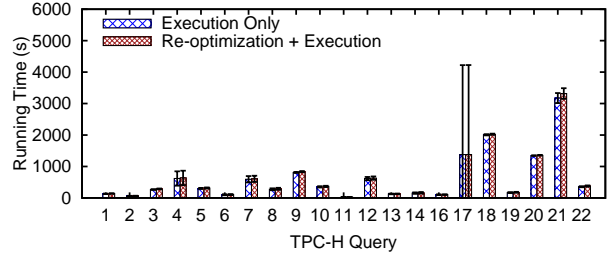}}
\caption{Query running time excluding/including re-optimization time over skewed 10GB TPC-H database ($z=1$).}
\label{fig:tpch-skew1:time-sampling}
\shrink
\end{figure}

\subsubsection{Discussion}

While one might expect the chance for re-optimization to generate a better plan is higher on skewed databases, our experiments show that this may not be the case, at least for TPC-H queries.
There are several different situations, though.
First, if a query is too simple, then there is almost no chance for re-optimization.
For example, Q1 contains no join, whereas Q16 and Q19 involve only one join so only local transformations are possible.
Second, the final plan returned by the re-optimization procedure heavily relies on the initial plan picked by the optimizer, which is the seed or starting point where re-optimization originates.
Note that, even if the optimizer has picked an inefficient plan, re-optimization cannot help if the estimated cost of that plan is not significantly erroneous.
One question is if this is possible: the optimizer picks an inferior plan whose cost estimate is correct?
This actually could happen because the optimizer may (incorrectly) overestimate the costs of the other plans in its search space.
Another subtle point is that the inferior plan might be robust to certain degree of errors in cardinality estimates.
Previous work has reported this phenomenon by noticing that the plan diagram (i.e., all possible plans and their governed optimality areas in the selectivity space) is dominated by just a couple of query plans~\cite{Picasso05}.

In summary, the effectiveness of re-optimization depends on factors that are out of the control of the re-optimization procedure itself.
Nevertheless, we have observed intriguing improvement for some long-running queries by applying re-optimization, especially after calibration of the cost units.

\subsection{Results of the Optimizer Torture Test}\label{sec:experiments:ott}

We created queries following our design of the OTT in Section~\ref{sec:benchmark:design}.
Specifically, if a query contains $n$ tables (i.e., $n-1$ joins), we let $m$ of the selections be $A=0$ ($A=1$), and let the remaining $n-m$ selections be $A=1$ ($A=0$).
We generated two sets of queries: (1) $n=5$ (4 joins), $m=4$; and (2) $n=6$ (5 joins), $m=4$.
Note that the maximal non-empty sub-queries then contain 3 joins over 4 tables with result size of roughly $100^4=10^8$ rows.\footnote{Recall that a non-empty query must have equal $A$'s (Equation~\ref{eq:non-empty-condition}) and we generated data with roughly 100 rows per distinct value (Section~\ref{sec:experiments:settings}).}
However, the size of each (whole) query is 0.
So we would like to see the ability of the optimizer as well as the re-optimization procedure to identify the empty/non-empty sub-queries.

\begin{figure}[t]
\centering
\subfigure[Without calibration of the cost units]{ \label{fig:hdc-4:time:no-calib}
\includegraphics[width=\columnwidth]{./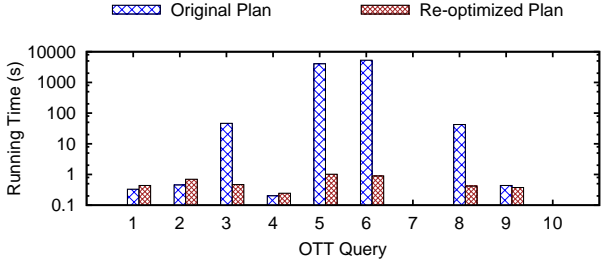}}
\subfigure[With calibration of the cost units]{ \label{fig:hdc-4:time:calib}
\includegraphics[width=\columnwidth]{./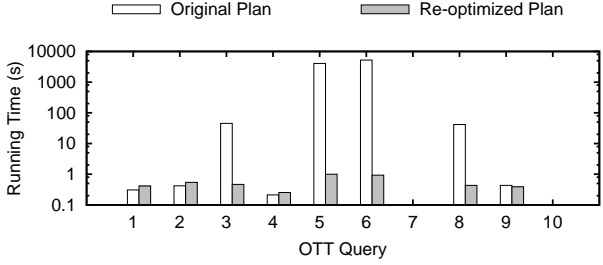}}
\caption{Query running times of 4-join queries.}
\label{fig:hdc-4:time}
\shrink
\end{figure}

Figure~\ref{fig:hdc-4:time} and~\ref{fig:hdc-5:time} present the running times of the 4-join and 5-join queries, respectively.
We generated in total 10 4-join queries and 30 5-join queries.
Note that the $y$-axes are in log scale and we do not show queries that finish in less than 0.1 second.
As we can see, sometimes the optimizer failed to detect the existence of empty sub-queries: it generated plans where empty join predicates were evaluated after the non-empty ones.
The running times of these queries were then hundreds or even thousands of seconds.
On the other hand, the re-optimization procedure did an almost perfect job in detecting empty joins, which led to very efficient query plans where the empty joins were evaluated first: all the queries after re-optimization finished in less than 1 second.

\begin{figure}[!h]
\centering
\subfigure[Without calibration of the cost units]{ \label{fig:hdc-5:time:no-calib}
\includegraphics[width=\columnwidth]{./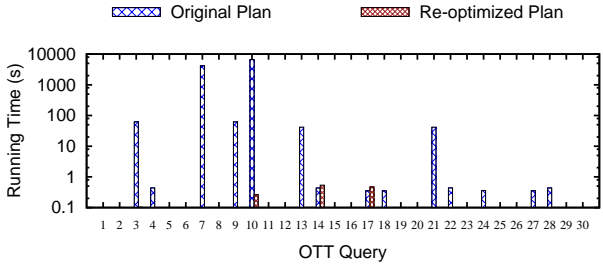}}
\subfigure[With calibration of the cost units]{ \label{fig:hdc-5:time:calib}
\includegraphics[width=\columnwidth]{./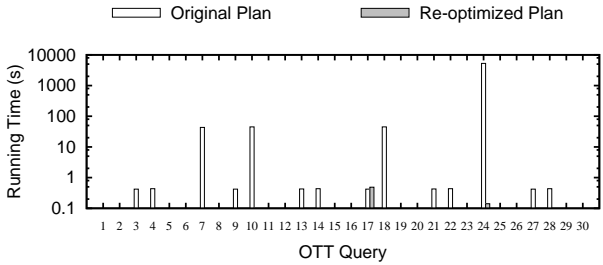}}
\caption{Query running times of 5-join queries.}
\label{fig:hdc-5:time}
\shrink
\end{figure}

We also did similar studies regarding the number of plans generated during re-optimization and the time it consumed.
Due to space constraints, we refer the readers to Appendix~\ref{appendix:sec:exp:ott} for the details.



One might argue that the OTT queries are really contrived: these queries are hardly to see in real-world workloads.
While this might be true, we think these queries serve our purpose as exemplifying extremely hard cases for query optimization.
Note that hard cases are not merely long-running queries: queries as simple as sequentially scanning huge tables are long-running too, but there is nothing query optimization can help with.
Hard cases are queries where efficient execution plans do exist but it might be difficult for the optimizer to find them.
The OTT queries are just these instances.
Based on the experimental results of the OTT queries, re-optimization is helpful to give the optimizer second chances if it initially made a bad decision.

Another concern is if commercial database systems could do a better job on the OTT queries. In regard of this, we also ran the OTT over two major commercial database systems. The performance is very similar to that of PostgreSQL (Figure~\ref{fig:db2:time} and~\ref{fig:sqlserver:time}).
We therefore speculate that commercial systems could also benefit from our re-optimization technique proposed in this paper.

\begin{figure}[t]
\centering
\subfigure[4-join OTT queries]{ \label{fig:db2:time:4-join}
\includegraphics[width=\columnwidth]{./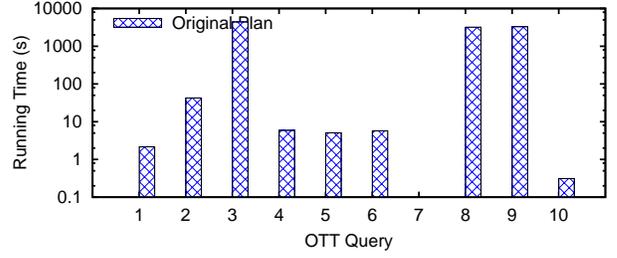}}
\subfigure[5-join OTT queries]{ \label{fig:db2:time:5-join}
\includegraphics[width=\columnwidth]{./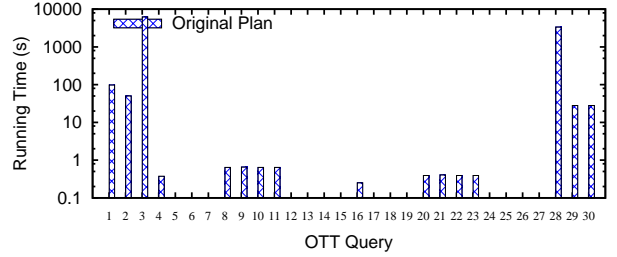}}
\caption{Query running times of the OTT queries on the commercial database system A.}
\label{fig:db2:time}
\shrink
\end{figure}

\begin{figure}[!h]
\centering
\subfigure[4-join OTT queries]{ \label{fig:sqlserver:time:4-join}
\includegraphics[width=\columnwidth]{./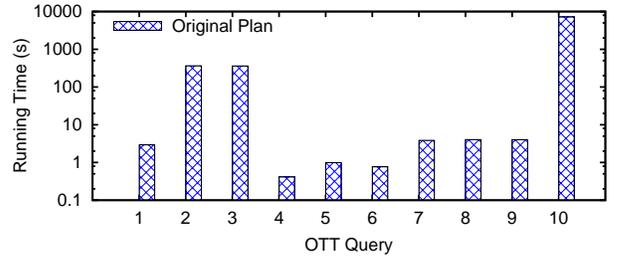}}
\subfigure[5-join OTT queries]{ \label{fig:sqlserver:time:5-join}
\includegraphics[width=\columnwidth]{./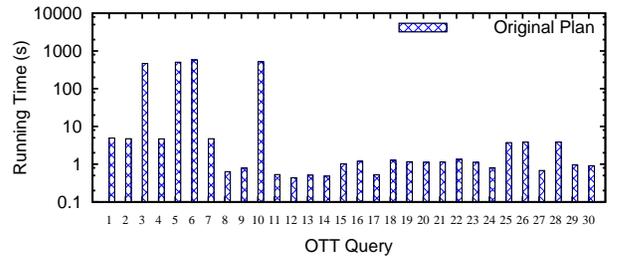}}
\caption{Query running times of the OTT queries on the commercial database system B.}
\label{fig:sqlserver:time}
\shrink
\end{figure}

\subsubsection{A Note on Multidimensional Histograms}\label{sec:experiments:multi-histograms}

Note that even using multidimensional histograms (e.g.,~\cite{BrunoCG01,MuralikrishnaD88,Poosala-hist97}) may not be able to detect the data correlation presented in the OTT queries, unless the buckets are so fine-grained that the exact joint distributions are retained.
To understand this, let us consider the following example.
\begin{example}
Following our design of OTT, suppose that now we only have two tables $R_1(A_1, B_1)$ and $R_2(A_2, B_2)$.
Moreover, suppose that each $A_k$ (and thus $B_k$) contains $m=2l$ distinct values, and we construct (perfect) 2-dimensional histograms on $(A_k, B_k)$ ($k=1,2$).
Each dimension is evenly divided into $\frac{m}{2}=l$ intervals, so each histogram contains $l^2$ buckets.
The joint distribution over $(A_k, B_k)$ estimated by using the histogram is then:
\begin{displaymath}
\left\{
\begin{array}{ll}
\Pr(2r-2\leq A_k < 2r, 2r-2\leq B_k < 2r) = \frac{1}{l}, & \textrm{$1\leq r\leq l$;}\\
\Pr(a_l\leq A_k < a_2, b_1\leq B_k < b_2) = 0, & \textrm{otherwise.}
\end{array}
\right.
\end{displaymath}
For instance, if $m=100$, then $l=50$. So we have
$\Pr(0\leq A_k < 2, 0\leq B_k < 2)=\cdots=\Pr(98\leq A_k < 100, 98\leq B_k < 100)=\frac{1}{50},$
while all the other buckets are empty.
On the other hand, the actual joint distribution is
\begin{displaymath}
\Pr(A_k=a, B_k=b)=\left\{
\begin{array}{ll}
\frac{1}{m}, & \textrm{if $a=b$;}\\
0, & \textrm{otherwise.}
\end{array}
\right.
\end{displaymath}
Now, let us consider the selectivities for two OTT queries:
\begin{enumerate}[$(q_1)$]
\item $\sigma_{A_1=0\land A_2=1\land B_1=B_2}(R_1\times R_2)$;
\item $\sigma_{A_1=0\land A_2=0\land B_1=B_2}(R_2\times R_2)$.
\end{enumerate}
We know that $q_1$ is empty but $q_2$ is not.
However, the estimated selectivity (and thus cardinality) of $q_1$ and $q_2$ is the same by using the 2-dimensional histogram, because of the assumption used by histograms that data inside each bucket is uniformly distributed.\footnote{It is easy to verify that the selectivity estimates are $\hat{s}_1=\hat{s}_2=\frac{1}{8l^2}$.}
With the setting used in our experiments, $m=100$ and thus $l=50$. So each 2-dimensional histogram contains $l^2=2,500$ buckets. However, even such detailed histograms cannot help the optimizer distinguish empty joins from nonempty ones. Furthermore, note that our conclusion is independent of $m$, while the number of buckets increases quadratically in $m$. For instance, when $m=10^4$ which means we have histograms containing $2.5\times 10^7$ buckets, the optimizer still cannot rely on the histograms to generate efficient query plans for OTT queries.
\end{example}

{\HL
\subsection{Effectiveness of Iteration}

One interesting further question is how much benefit iteration brings in.
Since running plans over samples incurs additional cost, we may wish to stop the iteration as early as possible rather than wait until its convergence.
To investigate this, we also tested execution times for plans generated during re-optimization on the original databases.
We focus on queries for which at least two plans were generated.
Figure~\ref{fig:tpch:time:rounds} presents typical results for TPC-H queries Q8, Q9, and Q21, and Figure~\ref{fig:ott:time:rounds} presents typical results for the OTT queries.
For each query, the plan in the first round is the original one returned by the optimizer.}


\begin{figure}
\centering
\includegraphics[width=\columnwidth]{./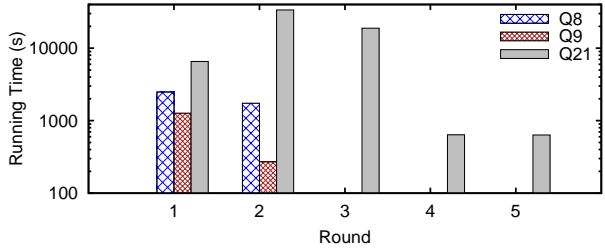}
\caption{Running time of plans generated in re-optimization for TPC-H queries ($z=0$) without calibration of cost units.}
\label{fig:tpch:time:rounds}
\vskip -3ex
\end{figure}

\begin{figure}
\centering
\subfigure[4-join queries]{ \label{fig:ott:time:rounds:4-join}
\includegraphics[width=\columnwidth]{./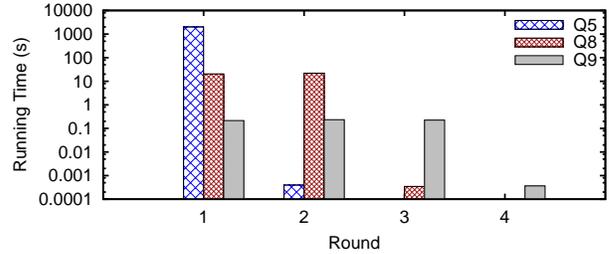}}
\subfigure[5-join queries]{ \label{fig:ott:time:rounds:5-join}
\includegraphics[width=\columnwidth]{./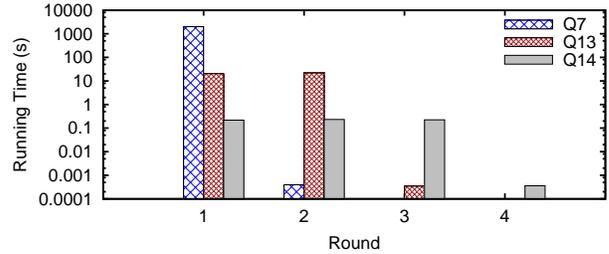}}
\caption{Running time of plans generated in re-optimization for the OTT queries without calibration of cost units.
}
\label{fig:ott:time:rounds}
\vskip -3ex
\end{figure}

{\HL
We have the following observations.
While the plan returned in the second round of iteration (i.e., the first different plan returned by re-optimization) often dramatically reduces the query execution time, it is not always the case.
Sometimes it takes additional rounds before an improved plan is found while the plans generated in between have similar execution times (e.g., 4-join OTT query Q8 and 5-join OTT query Q13).
In the case of TPC-H Q21, the execution times of intermediate plans generated during re-optimization are even not non-increasing.
The plans returned by the optimizer in the second and third round are even much worse than the original plan. (Note that the $y$-axis of Figure~\ref{fig:tpch:time:rounds} is in log scale.)
Fortunately, by continuing with the iteration, we can eventually reach a plan that is much better than the original one.

The above observations give rise to the question why worse plans might be generated during re-optimization.
To understand this, note that in the middle of re-optimization, the plans returned by the optimizer are based on statistics that are \emph{partially} validated by sampling.
The optimizer is likely to generate an ``optimal'' plan due to underestimating the costs of joins that have not been covered by plans generated in previous rounds.
This phenomenon has also been observed in previous work~\cite{ChaudhuriNR08-vldb}.
It is the essential reason that we cannot make Theorem~\ref{theorem:local-optimality} stronger.
As long as there exist uncovered joins, there is no guarantee on the plan returned by the optimizer.
Only upon the convergence of re-optimization we can say that the final plan is locally optimal.

Nevertheless, in practice we can still have various strategies to control the overhead of re-optimization.
For example, we can stop re-optimization if it does not converge after a certain number of rounds, or if the time spent on re-optimization has reached some timeout threshold.
We then simply return the best plan among the plans generated so far, based on their cost estimates by using refined cardinality estimates from sampling~\cite{WuCZTHN13}.
As another option, it might even be worth considering not doing re-optimization at all if the estimated query execution time is shorter than some threshold, or only doing it if we run that plan and get past that threshold without being close to done.
}

\section{Related Work}\label{sec:relatedwork}

Query optimization has been studied for decades, and we refer the readers to~\cite{Chaudhuri98} and~\cite{Ioannidis96} for surveys in this area.

Cardinality estimation is a critical problem in cost-based query optimization, and has triggered extensive research in the database community. Approaches for cardinality estimation in the literature are either static or dynamic.
Static approaches usually rely on various statistics that are collected and maintained periodically in an off-line manner, such as histograms (e.g.,~\cite{Ioannidis-hist03,Poosala-hist97}), samples (e.g.,~\cite{Charikar-sample00,Haas-sample96,Lipton-sample90}), sketches (e.g.,~\cite{Alon-sketch99,Rusu-sketch08}), or even graphical models (e.g.~\cite{GetoorTK01,TzoumasDJ11}). In practice, approaches based on histograms are dominant in the implementations of current query optimizers. However, histogram-based approaches have to rely on the notorious attribute-value-independence (AVI) assumption, and they often fail to capture data correlations, which could result in significant errors in cardinality estimates. While variants of histograms (in particular, multidimensional histograms, e.g.,~\cite{BrunoCG01,MuralikrishnaD88,Poosala-hist97}) have been proposed to overcome the AVI assumption, they suffer from significantly increased overhead on large databases.
{\color{black} Meanwhile, even if we can afford the overhead of using multidimensional histograms, they are still insufficient in many cases, as we discussed in Section~\ref{sec:experiments:multi-histograms}.
Compared with histogram-based approaches, sampling is better at capturing data correlation.
One reason for this is that sampling evaluates queries on real rather than summarized data.}
{\HL
There are many sampling algorithms, and in this paper we just picked a simple one (see~\cite{VengerovMZC15} for a recent survey).
We do not try to explore more advanced sampling techniques (e.g.~\cite{EstanN06,VengerovMZC15}), which we believe could further improve the quality of cardinality estimates.
}

On the other hand, dynamic approaches further utilize information gathered during query runtime. Approaches in this direction include dynamic query plans (e.g.,~\cite{ColeG94,GraefeW89}), parametric query optimization (e.g.~\cite{IoannidisNSS92}), query feedback (e.g.,~\cite{BrunoC02,Stillger-leo}), mid-query re-optimization (e.g.~\cite{KabraD98,MarklRSLP04}), and quite recently, plan bouquets~\cite{DuttH14}.
The ideas behind dynamic query plans and parametric query optimization are similar: rather than picking one single optimal query plan, all possible optimal plans are retained and the decision is deferred until runtime.
%
Both approaches suffer from the problem of combinatorial explosion and are usually used in contexts where expensive pre-compilation stages are affordable. The recent development of plan bouquets~\cite{DuttH14} is built on top of parametric query optimization so it may also incur a heavy query compilation stage.

Meanwhile, approaches based on query feedback record statistics of past queries and use this information to improve cardinality estimates for future queries. Some of these approaches have been adopted in commercial systems such as IBM DB2~\cite{Stillger-leo} and Microsoft SQL Server~\cite{BrunoC02}. Nonetheless, collecting query feedback incurs additional runtime overhead as well as storage overhead of ever-growing volume of statistics.

{\HL
The most relevant work in the literature is the line along mid-query re-optimization~\cite{KabraD98,MarklRSLP04}.
The major difference is that re-optimization was previously carried out at runtime over the actual database rather than at query compilation time over the samples.
The main issue is the trade-off between the overhead spent on re-optimization and the improvement on the query plan.
In the introduction, we have articulated the pros and cons of both techniques.
In some sense, our approach can be thought of as a ``dry run'' of runtime re-optimization.
But it is much cheaper because it is performed over the sampled data.
As we have seen, cardinality estimation errors due to data correlation can be caught by the sample runs.
So it is perhaps an overkill to detect these errors at runtime by running the query over the actual database.
Sometimes optimizers make mistakes that involve a bad ordering that results in a larger than expected result from a join or a selection.
It is true that runtime re-optimization can detect this, but this may require the query evaluator to do a substantial amount of work before it is detected.
For example, if the data is fed into an operator in a non-random order, because of an unlucky ordering the fact that the
operator has a much larger than expected result may not be obvious until a substantial portion of the input has been consumed and substantial
system resources have been expended.
Furthermore, it is non-trivial to stop an operator in mid-flight and switch to a different plan --- for
this reason most runtime query re-optimization deals with changing plans at pipeline boundaries.
Running a bad plan until a pipeline completes might be very expensive.

Nonetheless, we are not claiming that we should just use our approach.
Rather, our approach is complementary to runtime re-optimization techniques: it is certainly possible to apply those techniques on the plan returned by our approach.
The hope is that, after applying our approach, the chance that we need to invoke more costly runtime re-optimizaton techniques can be significantly reduced.
This is similar to the ``alerter'' idea that has been used in physical database tuning~\cite{BrunoC06}, where a lightweight ``alerter'' is invoked to determine the potential performance gain before the more costly database tuning advisor is called.
There is also some recent work on applying runtime re-optimization to Map-Reduce based systems~\cite{KaranasosBKOEXJ14}.
The ``pilot run'' idea explored in this work is close to our motivation, where the goal is also to avoid starting with a bad plan by collecting more accurate statistics via scanning a small amount of samples.
However, currently ``pilot run'' only collects statistics for leaf tables.
It is then interesting to see if ``pilot run'' could be enhanced by considering our technique.
}

In some sense, our approach sits between existing static and dynamic approaches. We combine the advantage of lower overheads from static approaches and the advantage of more optimization opportunities from dynamic approaches. This compromise leads to a lightweight query re-optimization procedure that could bring up better query plans.
{\HL
However, this also unavoidably leads to the coexistence of both histogram-based and sampling-based cardinality estimates during re-optimization, which may cause inconsistency issues~\cite{MarklHKMST07}.
Somehow, this is a general problem when different types of cardinality estimation techniques are used together.
For example, an optimizer that also uses multi-dimensional histograms (in addition to one-dimensional histograms) may also have to handle inconsistencies.
Previous runtime re-optimization techniques are also likely to suffer similar issues.
In this sense, this is an orthogonal problem and we do not try to address it in this paper.
In spite of that, it is interesting future work to see how much more improvement we can get if we further incorporate the approach based on the maximum entropy principle~\cite{MarklHKMST07} into our re-optimization framework to resolve inconsistency in cardinality estimates.
}

Finally, we note that we are not the first that investigates the idea of incorporating sampling into query optimization.
Ilyas et al. proposed using sampling to detect data correlations and then collecting joint statistics for those correlated data columns~\cite{IlyasMHBA04}.
However, this seems to be insufficient if data correlation is caused by specific selection predicates, such as those OTT queries used in our experiments.
Babcock and Chaudhuri also investigated the usage of sampling in developing a robust query optimizer~\cite{BabcockC05}.
While robustness is another interesting goal for query optimizer, it is beyond the scope of this paper.

\section{Conclusion} \label{sec:conclusion}

In this paper, we studied the problem of incorporating sampling-based cardinality estimates into query optimization.
We proposed an iterative query re-optimization procedure that supplements the optimizer with refreshed cardinality estimates via sampling and gives it second chances to generate better query plans.
We show the efficiency and effectiveness of this re-optimization procedure both theoretically and experimentally.

There are several directions worth further exploring.
First, as we have mentioned, the initial plan picked by the optimizer may have great impact on the final plan returned by re-optimization.
While it remains interesting to study this impact theoretically, it might also be an intriguing idea to think about varying the way that the query optimizer works.
For example, rather than just returning one plan, the optimizer could return several candidates and let the re-optimization procedure work on each of them.
This might make up for the potentially bad situation currently faced by the re-optimization procedure that it may start with a bad seed plan.
Second, the re-optimization procedure itself could be further improved.
As an example, note that in this paper we let the optimizer unconditionally accept cardinality estimates by sampling.
However, sampling is by no means perfect.
A more conservative approach is to consider the uncertainty of the cardinality estimates returned by sampling as well.
The previous work~\cite{WuWHN14} has investigated the problem of quantifying uncertainty in sampling-based query running time estimation.
It is very interesting to study a combination of that framework with the re-optimization procedure proposed in this paper.
We leave all these as promising areas for future work.


\newpage

\section{Acknowledgements}

We thank Heng Guo for his help with the proof of Theorem~\ref{theorem:SN-upper-bound}.

{\renewcommand{\baselinestretch}{1.05}
\small

\bibliographystyle{abbrv}
\bibliography{querytime}
}

\appendix
{\HL
\section{Additional Results}\label{appendix:sec:exp}

In this section, we present additional experimental results for the OTT queries.
We also performed the same experiments for TPC-DS queries, and we report the results here.

\subsection{Additional Results on OTT Queries}\label{appendix:sec:exp:ott}

For OTT queries, we also did similar studies regarding the number of plans generated during re-optimization and the time it consumed.
Figure~\ref{fig:hdc:num} presents the number of plans generated during the re-optimization procedure for the 4-join and 5-join OTT queries.
Figure~\ref{fig:hdc-4:time-sampling} and~\ref{fig:hdc-5:time-sampling} further present the comparisons of the running times by excluding or including re-optimization times for these queries (the $y$-axes are in log scale).


\begin{figure}[!h]
\centering
\subfigure[4-join queries]{ \label{fig:hdc:num:4-join}
\includegraphics[width=\columnwidth]{./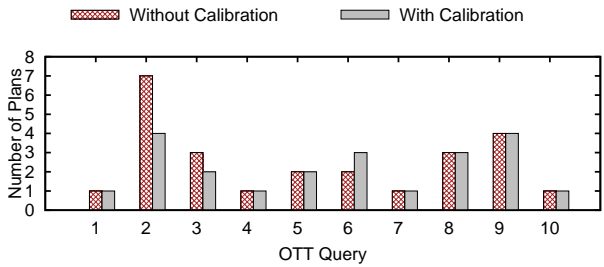}}
\subfigure[5-join queries]{ \label{fig:hdc:num:5-join}
\includegraphics[width=\columnwidth]{./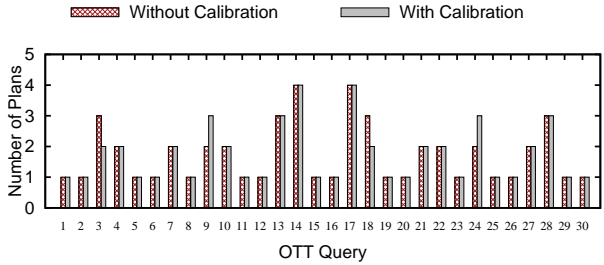}}
\caption{The number of plans generated during re-optimization for the OTT queries.}
\label{fig:hdc:num}
\shrink
\end{figure}

\begin{figure}[htb]
\centering
\subfigure[Without calibration of the cost units]{ \label{fig:hdc-4:time-sampling:no-calib}
\includegraphics[width=\columnwidth]{./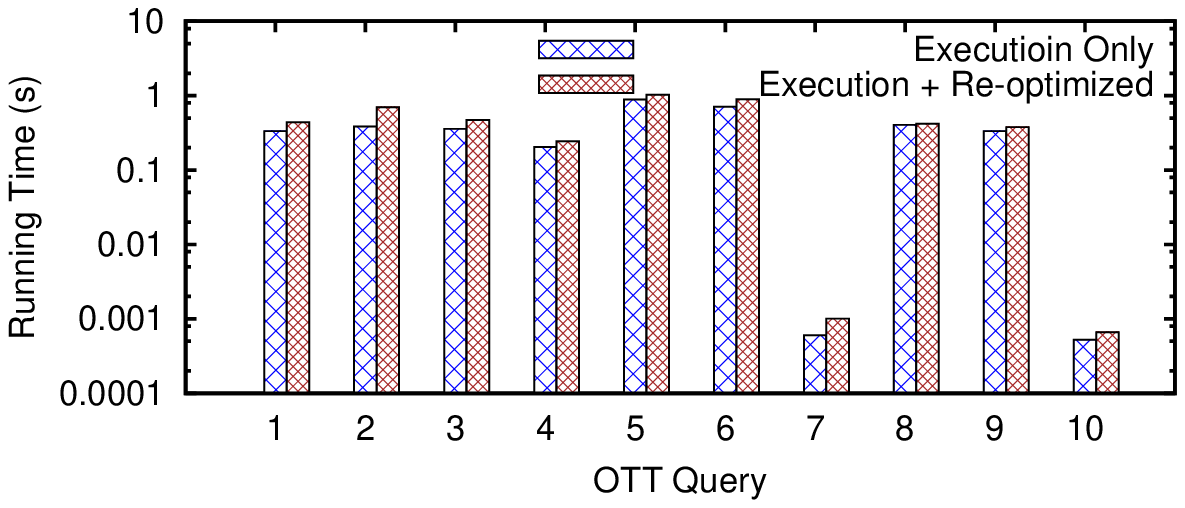}}
\subfigure[With calibration of the cost units]{ \label{fig:hdc-4:time-sampling:calib}
\includegraphics[width=\columnwidth]{./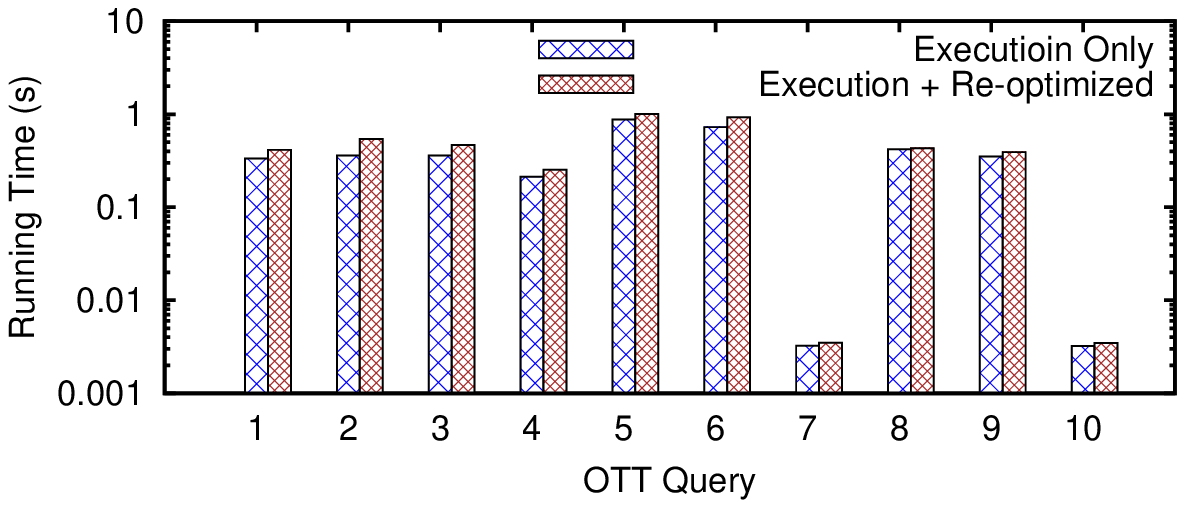}}
\caption{Query running time excluding/including re-optimization time for the 4-join OTT queries.}
\label{fig:hdc-4:time-sampling}
\shrink
\end{figure}

\begin{figure}[!h]
\centering
\subfigure[Without calibration of the cost units]{ \label{fig:hdc-5:time-sampling:no-calib}
\includegraphics[width=\columnwidth]{./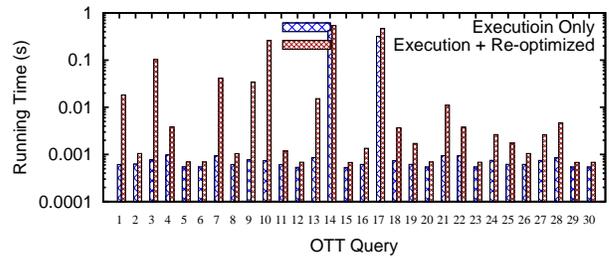}}
\subfigure[With calibration of the cost units]{ \label{fig:hdc-5:time-sampling:calib}
\includegraphics[width=\columnwidth]{./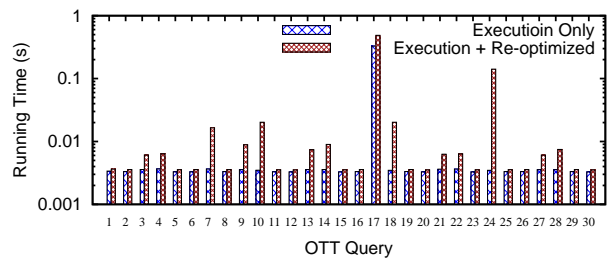}}
\caption{Query running time excluding/including re-optimization time for the 5-join OTT queries.}
\label{fig:hdc-5:time-sampling}
\shrink
\end{figure}

\subsection{Results on the TPC-DS Benchmark}\label{appendix:sec:exp:tpcds}

We conducted the same experiments on the TPC-DS benchmark.
Specifically, we use a subset containing 29 TPC-DS queries that are supported by PostgreSQL and our current implementation,
and we use a 10GB TPC-DS database.
The experiments were performed on a PC with 3.4GHz Intel 4-core CPU
and 8GB memory, and we ran PostgreSQL under Linux 2.6.32.

\begin{figure}[t]
\centering
\subfigure[Without calibration of the cost units]{ \label{fig:tpcds:time:no-calib}
\includegraphics[width=\columnwidth]{./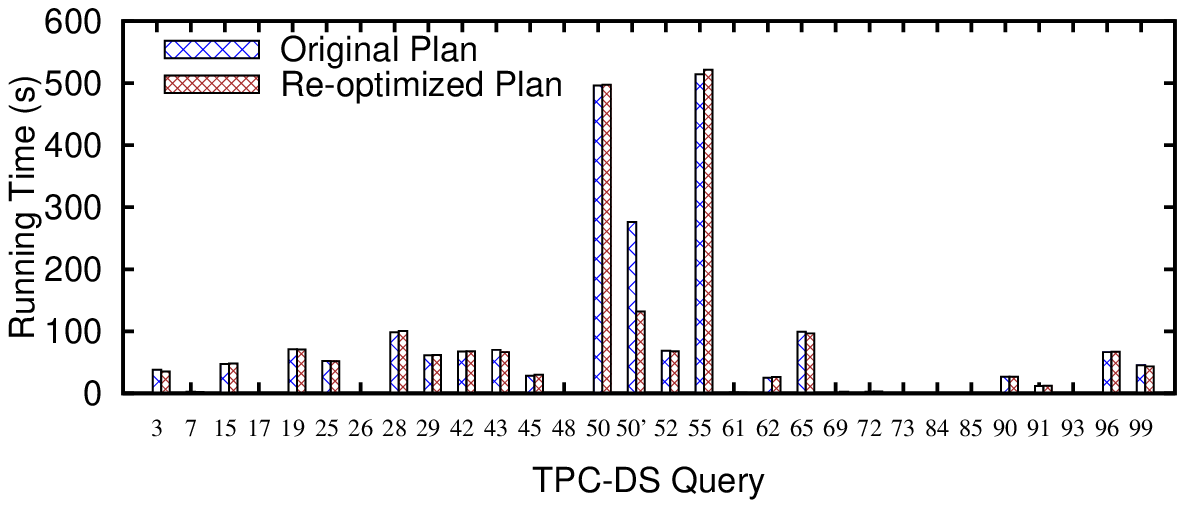}}
\subfigure[With calibration of the cost units]{ \label{fig:tpcds:time:calib}
\includegraphics[width=\columnwidth]{./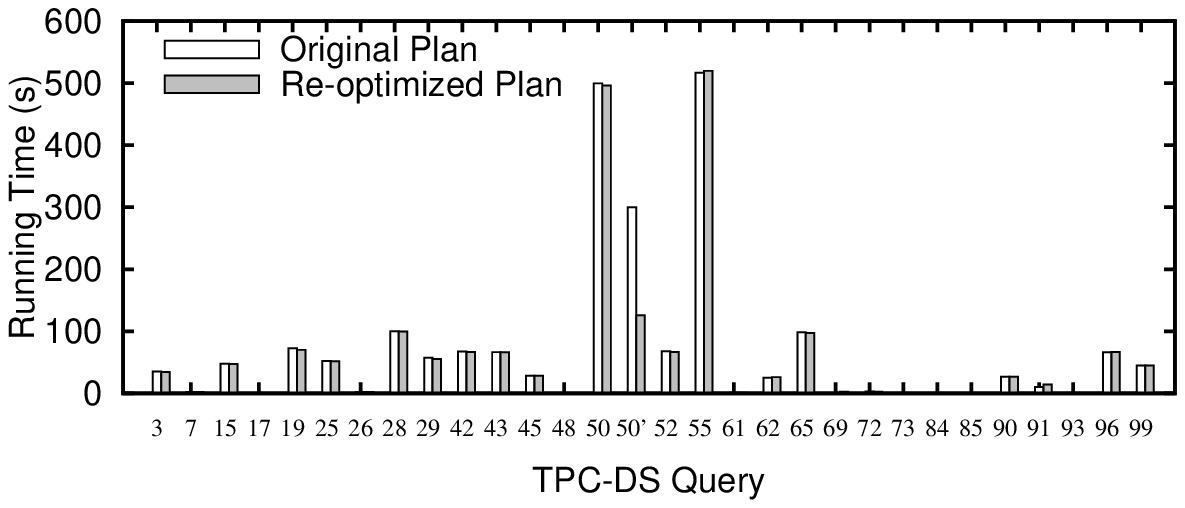}}
\caption{Query running time over 10GB TPC-DS database.}
\label{fig:tpcds:time}
\shrink
\end{figure}

We again use a sampling ratio of 5\%. Figure~\ref{fig:tpcds:time} presents the execution time of each TPC-DS query.
The time spent on running query plans over the samples is again trivial and therefore is not plotted.
Figure~\ref{fig:tpcds:num-plans} further presents the number of plans generated during re-optimization.\footnote{The results we show here are based on the TPC-DS database with only indexes on primary keys of the tables.
The TPC-DS benchmark specification does not recommend any indexes, which potentially could have impact on the optimizer's choices.
Regarding this, we actually further tried to build indexes on columns that are referenced by the queries.
However, the results we obtained are very similar to the ones we present here.}

\begin{figure}
\centering
\includegraphics[width=\columnwidth]{./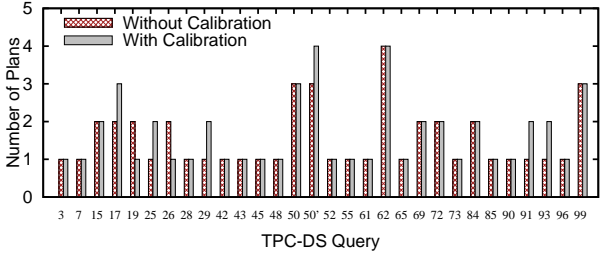}
\caption{The number of plans generated during re-optimization over 10GB TPC-DS database.}
\label{fig:tpcds:num-plans}
\shrink
\end{figure}

Unfortunately, we do not observe remarkable improvement over the TPC-DS queries we tested by using re-optimization.
There are several reasons.
First, the majority of the queries are short-running ones, though some of them are quite complicated in terms of their query plans.
There is little room for improvement on these queries.
Second, for the four relatively long-running queries Q28, Q50, Q55, and Q62 (with running time above 100 seconds), three of them (Q28, Q55, and Q62) are a bit trivial for re-optimization to have impact.
Q28 only accesses one table. Q55 joins a fact table with two small dimension tables (with $<$ 100,000 rows), while Q62 joins a fact table with one small (with 70,000 rows) and three tiny dimension tables (with $<$ 50 rows).
Meanwhile, after checking the plans, we found that all cardinality estimates are on track.
We therefore cannot observe a plan change for any of these three queries (see Figure~\ref{fig:tpcds:num-plans}).
Third, even if there are significant errors in cardinality estimates, they may not lead to significant improvement on the query plan.
This is the case for Q50, which involves joins over two fact tables and three dimension tables.
The execution time is then dominated by the join between the two fact tables.
Unfortunately, the optimizer's cardinality estimate for that join is almost accurate.
As a result, re-optimization cannot change that part of the plan.
It did generate different access paths for the rest of the plan due to detection of cardinality estimation errors, but the remaining execution time after that dominant join was done is trivial.

We actually further tweaked Q50 by modifying the predicates over the dimension tables so that the selectivity estimates were changed.
By doing this, we did identify cases where re-optimization significantly reduced query execution times.
We show one such variant of Q50 (i.e., Q50') in Figure~\ref{fig:tpcds:time}.
The execution time of Q50' dropped from around 300 seconds to around 130 seconds after re-optimization, a 57\% reduction.
Tweaking the other queries may also give more opportunities for re-optimization.
Another way is to further change the distribution that governs data generation.
We do not pursue these options any further, for it remains debatable that these variants might just be contrived.
A more systematic way could be to identify queries that the current optimizers do not handle well, figure out how to handle these queries, then move on to find other kinds of queries that are still not handled well.
We suspect (and hope) that this is one way optimizers will improve over time --- our work on creating the ``optimizer torture test'' in Section~\ref{sec:benchmark} is one step in this direction.
We leave the study of other types of ``difficult'' queries as important direction for future work.

}

\section{Additional Analysis} \label{appendix:sec:analysis}


Continuing with our study of the efficiency of the re-optimization procedure in Section~\ref{sec:analysis:efficiency}, we now consider two special cases where the optimizer significantly overestimates or underestimates the selectivity of a particular join in the returned optimal plan. We focus our discussion on left-deep join trees. Before we proceed, we need the following assumption of cost functions.

\begin{assumption}\label{assumption:monotone}
Cost functions used by the optimizer are monotonically non-decreasing functions of input cardinalities.
\end{assumption}
We are not aware of a cost function that does not conform to this assumption, though.


\subsection{Overestimation Only}

Suppose that the error is an overestimate, that is, the actual cardinality is smaller than the one estimated by the optimizer. Let that join operator be $O$, and let the corresponding subtree rooted at $O$ be $\tree(O)$. Now consider a candidate plan $P$ in the search space, and let $\cost(P)$ be the estimated cost of $P$. Note that the following property holds:
\begin{lemma}\label{lemma:subtree}
$\cost(P)$ is affected (i.e., subject to change) only if $\tree(O)$ is a subtree of $\tree(P)$.
\end{lemma}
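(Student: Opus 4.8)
The plan is to prove the contrapositive: if $\tree(O)$ is not a subtree of $\tree(P)$, then the single cardinality correction at $O$ leaves $\cost(P)$ unchanged. The starting observation is that, for a left-deep plan, the optimizer computes $\cost(P)$ bottom-up: the cost of each physical operator in $P$ is a function of the cardinalities of its input subtree(s) and of its own output, and the output cardinality of a join node is in turn determined by the output cardinalities of its child nodes (together with fixed base-table statistics and the per-predicate selectivities). Thus $\cost(P)$ is ultimately a function of the collection of result cardinalities attached to the nodes of $\tree(P)$. I would also pin down precisely what the refinement changes: it replaces a single stored value, namely the estimated result cardinality of the join $\tree(O)$. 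Since a join result size depends only on which relations are joined and the associated predicates, this correction is attached to the specific subtree $\tree(O)$ and to nothing else.

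The core step is an induction on the structure of $\tree(P)$, processing nodes from the leaves upward. I claim that the output cardinality of a node $v$ in $P$ can differ between the pre-correction and post-correction computations only if $\tree(O)$ is a subtree of the subtree rooted at $v$. The base case is a leaf (base table), whose cardinality is a fixed statistic and so never changes. For the inductive step, the cardinality of an internal node $v$ is computed from the cardinalities of its children; by the induction hypothesis those change only if $\tree(O)$ lies within a child's subtree, while the value at $v$ itself is directly overwritten exactly when $v=\tree(O)$. In either case $\tree(O)$ is a subtree of the subtree rooted at $v$, which establishes the claim.

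Applying this claim to every node of $\tree(P)$ and combining it with the bottom-up cost decomposition finishes the argument: if $\tree(O)$ is not a subtree of $\tree(P)$, then no node's output cardinality changes, hence no operator's cost changes, and therefore $\cost(P)$ is unaffected. This is exactly the contrapositive of the lemma.

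The main obstacle I anticipate is making the cardinality-propagation argument airtight rather than hand-wavy: I must rule out the indirect route in which correcting the cardinality of $\tree(O)$ alters the cardinality of some join node that merely involves a \emph{superset} of $\tree(O)$'s relations but does not have $\tree(O)$ as a subtree (for instance, when those relations are introduced in a different order along the left-deep chain). The induction handles this because each node's cardinality is computed solely from its direct children, so a correction can only propagate along the chain of ancestors of the node $\tree(O)$ inside $P$ --- a chain that is empty precisely when $\tree(O)$ is absent as a subtree. Note that Assumption~\ref{assumption:monotone} is not actually needed for this lemma, since the statement concerns only \emph{whether} $\cost(P)$ changes and not the \emph{direction} of the change; monotonicity will instead be invoked in the subsequent results that exploit the overestimation structure.
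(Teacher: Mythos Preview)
The paper does not actually prove this lemma: it is stated as a self-evident observation in the overestimation-only analysis and immediately used without further justification. Your contrapositive-plus-structural-induction argument is a perfectly reasonable way to make that observation rigorous, and your remark that Assumption~\ref{assumption:monotone} plays no role here (it is needed only for the subsequent direction-of-change claim $\cost'(P)<\cost(P)$) is correct.

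One point deserves a bit more care. Your inductive step says ``the value at $v$ itself is directly overwritten exactly when $v=\tree(O)$,'' and in the obstacle paragraph you dismiss the different-ordering case by appealing to propagation through children. But in the paper's framework, validated cardinalities are stored in $\Gamma$ keyed by the \emph{unordered} join (i.e., the set of participating relations), not by the specific subtree shape --- this is what makes local transformations share statistics and what underlies Definition~\ref{definition:plan-coverage} and the encodings such as $(AB,ABC,ABCD)$. Consequently, if $P$ contains a node joining exactly the same relation set as $O$ but assembled in a different order, that node's cardinality is also directly overwritten, not merely reached by upward propagation. This is not a flaw in the lemma so much as a question of how to read ``$\tree(O)$ is a subtree of $\tree(P)$'': in the paper's usage it effectively means ``$P$ contains a node over the same relation set as $O$,'' and under that reading your argument goes through cleanly. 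Just make the indexing convention explicit when you write up the proof.
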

Moreover, under Assumption~\ref{assumption:monotone}, the refined cost estimate $\cost'(P)$ satisfies $\cost'(P) < \cost(P)$ because of the overestimate. Therefore, if we let the set of all such plans $P$ be $\mathcal{P}$, then the next optimal plan picked by the optimizer must be from $\mathcal{P}$. We thus can prove the following result:

\begin{theorem}\label{theorem:overestimate}
Suppose that we only consider left-deep join trees. Let $m$ be the number of joins in the query. If all estimation errors are overestimates, then in the worst case the re-optimization procedure will terminate in at most $m + 1$ steps.
\end{theorem}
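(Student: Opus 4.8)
The plan is to exhibit a strictly increasing monovariant---the length of the validated bottom prefix that consecutive plans share---and to bound the number of distinct plans by the number of relations. First I would record the structural fact peculiar to left-deep trees: every subtree of a left-deep plan is itself a complete left-deep join over the bottommost relations, so a subtree spanning $\ell$ relations is exactly the bottom-$\ell$ prefix of the plan. Writing $\tree_\ell(P)$ for this bottom-$\ell$ subtree, it follows that if a left-deep plan $P'$ contains $\tree_\ell(P)$ as a subtree, then $P'$ and $P$ must agree on their bottom $\ell$ relations and on every join among them.

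Next I would combine Lemma~\ref{lemma:subtree} with Assumption~\ref{assumption:monotone}. In round $i$ we validate every join of $P_i$; in the overestimation-only regime each such join can only have its cardinality lowered, so by monotonicity the refined cost $\cost'(P)$ of a candidate $P$ decreases exactly when $P$ contains one of the freshly corrected subtrees of $P_i$, and is otherwise unchanged. Since $P_i$ minimized the pre-correction cost, the new optimum $P_{i+1}$ can be strictly cheaper---and hence different from $P_i$---only if it contains one of these freshly corrected subtrees; this is precisely the observation stated just before the theorem. By the structural fact above, $P_{i+1}$ then shares with $P_i$ a bottom prefix whose length $\ell_i$ is at least the size of that corrected subtree.

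The crux is to show that $\ell_i$ \emph{strictly} increases. The bottom prefix of $P_i$ of length $\ell_{i-1}$ coincides with the corresponding prefix of $P_{i-1}$, whose joins were already validated and---being overestimates---already replaced by their true sampled values in round $i-1$; re-validating $P_i$ reproduces exactly those values, so nothing changes at levels $\le \ell_{i-1}$. Every freshly corrected subtree of $P_i$ therefore spans at least $\ell_{i-1}+1$ relations, giving $\ell_i \ge \ell_{i-1}+1$, with base case $\ell_1 \ge 2$ since the smallest join spans two relations. As agreement on all $K=m+1$ relations would force $P_{i+1}=P_i$, any genuine transformation satisfies $\ell_i \le m$. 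The strictly increasing chain $2 \le \ell_1 < \ell_2 < \cdots \le m$ admits at most $m-1$ transitions between distinct plans, so $P_1,\dots,P_n$ number at most $m$ (i.e.\ $n\le m$) and the procedure halts at $P_{n+1}=P_n$, that is, within $m+1$ steps; this uniform argument covers the possibly-final local transformation of Case~(3) of Theorem~\ref{theorem:sequence-transformation}, which still must re-use a freshly corrected subtree and so respects the same bound. I expect the main obstacle to be nailing down the step ``$P_{i+1}$ must contain a freshly corrected subtree of $P_i$'': it rests on strict cost improvement, which requires care about ties in the optimizer's selection (assuming either strict inequalities or a tie-break that retains the incumbent plan), together with the monotonicity of $\cost$ used to guarantee that only plans touching a corrected subtree become cheaper.
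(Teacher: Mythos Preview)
Your proposal is correct and follows essentially the same approach as the paper. Both arguments exhibit a strictly increasing integer monovariant bounded by $m$: the paper tracks $I(O_i)$, the index of the lowest remaining overestimated join in $P_i$, while you track $\ell_i$, the length of the validated bottom prefix shared by $P_i$ and $P_{i+1}$; these are interlocked via $I(O_i)\ge \ell_{i-1}$ and $\ell_i\ge I(O_i)+1$, so the two chains rise in lockstep. Your write-up is somewhat more explicit than the paper's about why $P_{i+1}$ must contain a freshly corrected subtree (you invoke Assumption~\ref{assumption:monotone} and the tie-breaking caveat directly), but the underlying idea and the final count are the same.
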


The proof of Theorem~\ref{theorem:overestimate} is included in Appendix~\ref{appendix:sec:proofs:theorem:overestimate}.
We emphasize that $m+1$ is a worst-case upper bound only. The intuition behind Theorem~\ref{theorem:overestimate} is that, for left-deep join trees, the validated subtree belonging to the final re-optimized plan can grow by at least one more level (i.e., with one more validated join) in each re-optimization step.


\subsection{Underestimation Only}

Suppose that, on the other hand, the error is an underestimate, that is, the actual cardinality is larger than the one estimated by the optimizer. Then things become more complicated: not only those plans that contain the subtree rooted at the erroneous node but the plan with the lowest estimated cost in the rest of the search space (which is not affected by the estimation error) also has the potential of being the optimal plan, after the error is corrected (see Figure~\ref{fig:over-vs-under-estimate}). We next again focus on left-deep trees and present some analysis in such context.

\begin{figure}
\centering
\includegraphics[width=\columnwidth]{./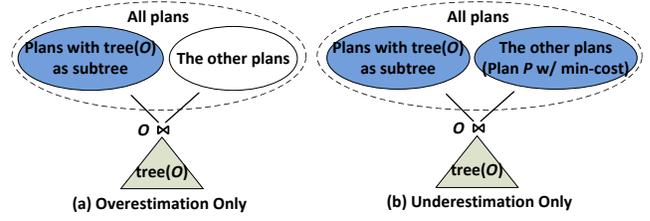}
\caption{Comparison of the space of candidate optimal plans when local estimation errors are overestimation-only or underestimation-only. Candidate optimal plans are shaded.}
\label{fig:over-vs-under-estimate}
\shrink
\end{figure}

Suppose that the join graph $G$ contains $M$ edges. We partition the join trees based on their first joins, which must be the edges of $G$. So we have $M$ partitions in total.

Let $s_i$ be the state when a plan returned by the optimizer uses the $i$-th edge in $G$ ($1\leq i\leq M$) as its first join. The re-optimization procedure can then be thought of as transitions between these $M$ states, i.e., a Markov chain. 
Assume that the equilibrium state distribution $\pi(s_i)$ of this Markov chain exists, subject to 
$$\sum\nolimits_{i=1}^M\pi(s_i)=1.$$
$\pi(s_i)$ therefore represents the probability that a plan generated in the re-optimization procedure would have its first join in the partition $i$ ($1\leq i\leq M$).
We can then estimate the expected number of steps before the procedure terminates as
$$S=\sum\nolimits_{i=1}^M \pi(s_i)\cdot N_i,$$
where $N_i$ is the number of steps/transitions confined in the partition $i$ before termination. Since we only consider left-deep trees, $N_i=N_j$ for $1\leq i,j\leq M$. As a result, $S$ can be simplified as $S=N_i$. So we are left with estimating $N_i$.

$N_i$ is the expected number of steps before termination if the transitions (i.e., re-optimization steps) are confined in the partition $i$. We can divide this process into stages, and each stage contains one more join than the previous stage. If there are $m$ joins, then we will have $m$ stages. In each stage, we go through every plan until we finish, and the average number of plans we considered is $S_{K_j}$ as computed by Equation~\ref{eq:SN}, where $K_j$ is the number of join-trees (actually subtrees) in the stage $j$ ($1\leq j\leq m$). So
$$S=N_i=\sum\nolimits_{j=1}^m S_{K_j}.$$
As an upper bound, we have $S\leq S_{N/M}$, where $N$ is the total number of different join trees considered by the optimizer, and $S_{N/M}$ is computed by Equation~\ref{eq:SN}. This is obtained by applying Theorem~\ref{theorem:efficiency} to each partition. $S_{N/M}$ is usually much smaller than $S_N$, according to Figure~\ref{fig:sN}.
Again, we emphasize that the analysis here only targets worst-case expectations, which might be too pessimistic.

\section{Joint/Marginal Distributions}\label{appendix:sec:distributions}

To see why the straightforward approach that first generates a relation $(A_1, ..., A_K, B_1, ..., B_K)$ based on the joint distribution and then splits it into different relation $R_1(A_1, B_1)$, ..., $R_K(A_K, B_K)$ is incorrect, let us consider the following example.

\begin{example}\label{example:distribution}
Suppose that we need to generate binary values for two attributes $A_1$ and $A_2$, with the following joint distribution:
\begin{enumerate}[(1)]
\item $p_{00}=\Pr(A_1=0, A_2=0)=0.1$;
\item $p_{01}=\Pr(A_1=0, A_2=1)=0$;
\item $p_{10}=\Pr(A_1=1, A_2=0)=0$;
\item $p_{11}=\Pr(A_1=1, A_2=1)=0.9$.
\end{enumerate}
Assume that we generate 10 tuples $(A_1, A_2)$ according to the above distribution. Then we will expect to obtain a multiset of tuples: $\{1\cdot(0,0),9\cdot(1,1)\}$. Here the notation $1\cdot(0,0)$ means there is one $(0,0)$ in the multiset. If we project the tuples onto $A_1$ and $A_2$ (without removing duplicates), we have $A_1=A_2=\{1\cdot 0, 9\cdot 1\}$. Now what is the joint distribution of $(A_1, A_2)$ once we see such a database? Note that we have no idea about the \emph{true} joint distribution that governs the generation of the data, because we are only allowed to see the \emph{marginal} distributions of $A_1$ and $A_2$. A natural inference of the joint distribution may be to consider the cross product $A_1\times A_2$. In this case we have
\begin{eqnarray*}
A_1\times A_2 &=& \{1\cdot 0, 9\cdot 1\}\times\{1\cdot 0, 9\cdot 1\}\\
&=& \{1\cdot(0,0), 9\cdot(0,1), 9\cdot(1,0), 81\cdot(1,1)\}.
\end{eqnarray*}
Hence, the ``observed'' joint distribution of $A_1$ and $A_2$ is:
\begin{enumerate}[(1)]
\item $p'_{00}=\Pr'(A_1=0, A_2=0)=1/100=0.01$;
\item $p'_{01}=\Pr'(A_1=0, A_2=1)=9/100=0.09$;
\item $p'_{10}=\Pr'(A_1=1, A_2=0)=9/100=0.09$;
\item $p'_{11}=\Pr'(A_1=1, A_2=1)=81/100=0.81$.
\end{enumerate}
\end{example}

The ``observed'' joint distribution is indeed ``the'' joint distribution when tables are joined. It might be easier to see this if we rewrite the query in Equation~\ref{eq:query} as
\begin{equation}\label{eq:query2}
\sigma_{A_1=c_1\land\cdots \land A_K=c_K\land B_1=B_2\land\cdots\land B_{K-1}=B_K}(R_1\times\cdots\times R_K).
\end{equation}
The previous example shows that there is information loss when marginalizing out attributes, which is somewhat similar to the notion of lossless/lossy joins in database normalization theory.

\section{Analysis of the OTT Queries}\label{appendix:analysis:ott}

We now compute the query size when Equation~\ref{eq:non-empty-condition} holds.
Consider a specific query $q$ where the constants in the selection predicates are fixed. Let us compute the cardinality of $q$, which is equivalent to computing the selectivity of the predicate in Equation~\ref{eq:query2}. In probabilistic sense, the selectivity $s$ can be interpreted as the chance that a (randomly picked) tuple from $R_1\times\cdots\times R_K$ making the selection predicate true. That is,
$$s=\Pr(A_1=c_1, \cdots, A_K=c_K, B_1=\cdots=B_K).$$

\begin{lemma}\label{lemma:s-actual}
When Equation~\ref{eq:non-empty-condition} holds, we have
$$s=\prod\nolimits_{k=1}^{K}\frac{1}{n(A_k)}.$$
\end{lemma}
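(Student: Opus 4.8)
The plan is to compute the selectivity $s$ directly as a probability over a uniformly random tuple of the cross product $R_1\times\cdots\times R_K$, exploiting two structural facts about the OTT data generated by Algorithm~\ref{alg:data-gen}: each relation is populated independently (distinct random seeds), and within each relation $B_k$ is an exact copy of $A_k$. The statement already fixes
$$s=\Pr\bigl(A_1=c_1,\ \ldots,\ A_K=c_K,\ B_1=\cdots=B_K\bigr),$$
so the task reduces to evaluating this probability under the non-empty condition (Equation~\ref{eq:non-empty-condition}).

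First I would record the independence structure. Because the tuple is drawn uniformly from the cross product and the relations were generated with independent seeds, the per-relation pairs $(A_1,B_1),\ldots,(A_K,B_K)$ are mutually independent. This is what will eventually let the joint probability factor into a product over $k$.

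The key observation is that, under Equation~\ref{eq:non-empty-condition}, the join equalities $B_1=\cdots=B_K$ are \emph{redundant} given the selection equalities. Since $B_k=A_k$ deterministically within each row, conditioning on the event $\{A_k=c_k:1\le k\le K\}$ forces $B_k=c_k$ for every $k$ with certainty; and when Equation~\ref{eq:non-empty-condition} holds we have $c_1=\cdots=c_K$, so the $B_k$'s coincide automatically. Hence the event inside the probability collapses to $\{A_1=c_1,\ldots,A_K=c_K\}$. By independence this factors, and since $\Pr(A_k)$ is uniform over its $n(A_k)$ distinct values, each factor equals $1/n(A_k)$:
$$s=\prod_{k=1}^{K}\Pr(A_k=c_k)=\prod_{k=1}^{K}\frac{1}{n(A_k)}.$$

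The factorization and the uniform-distribution evaluation are routine. The one step that deserves care — and the main obstacle I anticipate — is justifying that the join predicate contributes no extra factor. One must argue that $B_k=A_k$ is a deterministic relationship within each row rather than merely an equality in distribution, so that fixing $A_k=c_k$ pins down $B_k=c_k$ with probability one, making $B_1=\cdots=B_K$ a logical consequence of the selection predicate (under Equation~\ref{eq:non-empty-condition}) rather than an independent constraint to be multiplied in. Once this redundancy is established cleanly, the rest follows immediately.
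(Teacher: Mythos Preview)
Your proposal is correct and follows essentially the same approach as the paper: the paper writes $s=\Pr(\mathbf{B}\mid\mathbf{A})\Pr(\mathbf{A})$, observes $\Pr(\mathbf{B}\mid\mathbf{A})=1$ from the deterministic identity $B_k=A_k$ together with Equation~\ref{eq:non-empty-condition}, and then factors $\Pr(\mathbf{A})$ by independence and uniformity---which is exactly your ``redundancy of the join predicate'' argument in slightly different notation.
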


The proof of Lemma~\ref{lemma:s-actual} is included in Appendix~\ref{appendix:sec:proofs:lemma:s-actual}. As a result, the size of the query $q$ is
$$|q|=s\cdot \prod\nolimits_{k=1}^K|R_k|=\prod\nolimits_{k=1}^K\frac{|R_k|}{n(A_k)}.$$
To summarize, we have
\begin{displaymath}
|q|=\left\{
\begin{array}{ll}
\prod\nolimits_{k=1}^K\frac{|R_k|}{n(A_k)}, & \textrm{if $c_1=\cdots =c_K$;}\\
0, & \textrm{otherwise.}
\end{array}
\right.
\end{displaymath}

Now what will be the query size estimated by the optimizer? Again, let us compute the estimated selectivity $\hat{s}$, assuming that the optimizer knows the exact histograms of the $A_k$ and $B_k$ ($1\leq k\leq K$). Note that this assumption is stronger than the case in Section~\ref{sec:benchmark:data-gen:postgresql}, where the optimizer possesses exact histograms only for MCV's.
We have the following result:

\begin{lemma}\label{lemma:s-estimated}
Suppose that the AVI assumption is used. Assuming that $\Dom(B_k)$ is the same for $1\leq k\leq K$ and $|\Dom(B_k)|=L$, the estimated selectivity $\hat{s}$ is then
$$\hat{s}=\frac{1}{L^{K-1}}\prod\nolimits_{k=1}^{K}\frac{1}{n(A_k)}.$$
\end{lemma}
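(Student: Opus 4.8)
The plan is to exploit the AVI assumption to factor the selectivity of the conjunctive predicate in Equation~\ref{eq:query2} into a product of per-atom selectivities, and then to evaluate each factor. Under AVI the optimizer treats the $K$ selection atoms $A_k=c_k$ and the $K-1$ equi-join atoms $B_j=B_{j+1}$ ($1\leq j\leq K-1$) as mutually independent, so that
$$\hat{s}=\prod_{k=1}^{K}\hat{s}(A_k=c_k)\cdot\prod_{j=1}^{K-1}\hat{s}(B_j=B_{j+1}).$$
It then suffices to compute the two types of factors separately, and a minor bookkeeping point is to confirm that the chain condition $B_1=\cdots=B_K$ contributes exactly $K-1$ independent equi-join atoms, one per adjacent pair, which is immediate from the form of the query in Equation~\ref{eq:query}.

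For the selection atoms, since the data on each $A_k$ is uniform and the optimizer is assumed to possess the exact histogram, the estimated frequency of any particular value $c_k\in\Dom(A_k)$ is $1/n(A_k)$; hence $\hat{s}(A_k=c_k)=1/n(A_k)$. For the join atoms I would invoke the System~R reduction-factor formula recalled in Section~\ref{sec:benchmark:data-gen:postgresql}, namely $\hat{s}(B_j=B_{j+1})=1/\max\{n(B_j),n(B_{j+1})\}$. The one thing needing care here is pinning down $n(B_k)$: the hypothesis $|\Dom(B_k)|=L$ gives only the domain size, not the number of distinct values actually present. I would close this gap by noting that the generated data is uniform over the full domain (each value occurring, with roughly $100$ rows per value in the experimental instantiation), so every domain value appears and $n(B_k)=L$ for all $k$. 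Consequently each join factor equals $1/L$. Combining the two products then finishes the computation: the $K-1$ join atoms contribute $L^{-(K-1)}$ and the $K$ selection atoms contribute $\prod_{k=1}^{K}1/n(A_k)$, giving
$$\hat{s}=\frac{1}{L^{K-1}}\prod_{k=1}^{K}\frac{1}{n(A_k)},$$
as claimed.

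The main difficulty I anticipate is not the arithmetic but justifying the two modeling steps precisely. First, one must argue that the optimizer genuinely applies AVI across the mixed collection of selection and join predicates, so that the clean product form above is a faithful account of the optimizer's own estimate rather than an approximation I have introduced. Second, the identification $n(B_k)=L$ silently relies on uniformity of the generated data together with every domain value being populated; both hold by construction of the OTT database, so the argument is self-contained, but it is worth stating them explicitly. I would also remark that the point of keeping $n(A_k)$ symbolic (rather than substituting $L$) is to make the comparison with the true selectivity of Lemma~\ref{lemma:s-actual} transparent: the estimate differs from the actual selectivity precisely by the factor $L^{K-1}$, exposing the systematic underestimation that drives the whole OTT construction.
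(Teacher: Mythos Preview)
Your proposal is correct and reaches the same conclusion, but the paper computes the join factor differently. You invoke the System~R reduction factor $1/\max\{n(B_j),n(B_{j+1})\}$ and then argue $n(B_k)=L$; the paper instead works under the stronger assumption (stated just before the lemma) that the optimizer possesses \emph{exact} histograms on the $B_k$'s, and accordingly derives $\Pr(B_k=B_{k+1})$ from first principles by summing $\Pr(B_k=C_l,B_{k+1}=C_l)=1/(n(B_k)\,n(B_{k+1}))$ over all $l$, obtaining $L\cdot 1/L^2=1/L$. This is the histogram-``joining'' route PostgreSQL takes when full value lists are available, whereas the System~R formula is its fallback when they are not, so the paper's derivation is the one that matches its own modeling hypothesis. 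Under uniformity with a common domain of size $L$ the two computations coincide, which is why your shortcut lands on the right answer; the paper's argument is a bit more robust in that it does not rely on the coincidence that $\max\{n(B_j),n(B_{j+1})\}=L$ happens to equal the reciprocal of the true per-pair match probability.
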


The proof of Lemma~\ref{lemma:s-estimated} is included in Appendix~\ref{appendix:sec:proofs:lemma:s-estimated}.
Hence, the estimated size of the query $q$ is
$$\widehat{|q|}=\hat{s}\cdot \prod\nolimits_{k=1}^K|R_k|=\frac{1}{L^{K-1}}\prod\nolimits_{k=1}^K\frac{|R_k|}{n(A_k)}.$$
Note that this is regardless of if Equation~\ref{eq:non-empty-condition} holds or not.
In our experiments (Section~\ref{sec:experiments}) we further used this property to generate instance OTT queries.

Comparing the expressions of $|q|$ and $\widehat{|q|}$, if we define
$$d=\mid|q|-\widehat{|q|}\mid,$$
it then follows that
\begin{displaymath}
d=\left\{
\begin{array}{ll}
\big(1-\frac{1}{L^{K-1}}\big)\cdot\prod\nolimits_{k=1}^K\frac{|R_k|}{n(A_k)}, & \textrm{if $c_1=\cdots =c_K$;}\\
\frac{1}{L^{K-1}}\cdot\prod\nolimits_{k=1}^K\frac{|R_k|}{n(A_k)}, & \textrm{otherwise.}
\end{array}
\right.
\end{displaymath}

Let us further get some intuition about how large $d$ could be by considering the following specific example.
\begin{example}
For simplicity, assume that $M=|R_k| / n(A_k)$ is the same for $1\leq k\leq K$. $M$ is then the number of tuples per distinct value of $A_k$ ($1\leq k\leq K$). In this case,
\begin{displaymath}
d=\left\{
\begin{array}{ll}
\big(1- 1 / L^{K-1}\big)\cdot M^K, & \textrm{if $c_1=\cdots =c_K$;}\\
M^K / L^{K-1}, & \textrm{otherwise.}
\end{array}
\right.
\end{displaymath}
If $L=100$ and $M=100$, it then follows that
\begin{displaymath}
d=\left\{
\begin{array}{ll}
100^K - 100, & \textrm{if $c_1=\cdots =c_K$;}\\
100, & \textrm{otherwise.}
\end{array}
\right.
\end{displaymath}
For $K=4$ (i.e., just 3 joins), $d\approx 10^8$ if it happens to be that $c_1=\cdots =c_K$.
Moreover, if $c_1=\cdots =c_K$, then $|q|>\widehat{|q|}$.
Therefore, the optimizer will significantly underestimate the size of the join (by $10^8$).
So it is likely that the optimizer would pick an inefficient plan that it thought were efficient.
\end{example}

\section{Formal Definitions}

In this section, we further formalize the notion of local/global transformations.
This formalization is useful in some proofs presented in Appendix~\ref{appendix:sec:proofs}.

\subsection{Structural Equivalence}

We start by introducing structural equivalence, which is a special case of local transformation.
While we still use $\tree(P)$ to denote the join tree of a query plan $P$, here $\tree(P)$ is interpreted as a tree rather than a set.

\begin{definition}[Structural Equivalence]
Two plans $P$ and $P'$ (of the same query) are \emph{structurally equivalent} if $\tree(P)$ and $\tree(P')$ are the same.
\end{definition}

In other words, structurally equivalent plans have the same join order of the relations (i.e., $A\bowtie B$ is different from $B\bowtie A$); they only differ in the specific choices of physical operators (e.g., hash join vs. sort-merge join).

\subsection{Local and Global Transformations} \label{sec:analysis:transformations}

We further represent a join tree with a bottom-up, left-to-right encoding. Consider the following example.
\begin{example} [Encoding of Join Trees] \label{ex:join-tree-encoding}
Suppose that we have a join query over four relations $A$, $B$, $C$, and $D$, with the following join graph
$$G=\{(A, B), (C, D), (A, C), (A, D)\}.$$
In Figure~\ref{fig:transformations}, we have presented one left-deep tree plan $T_1$ and one bushy tree plan $T_2$ of the query.
The encodings of $T_1$ and $T_2$ are:
\begin{enumerate}[($T_1$)]
\item $((A\bowtie B)\bowtie C)\bowtie D \encoding (AB, ABC, ABCD)$;
\item $(A\bowtie B)\bowtie (C\bowtie D)\encoding (AB, CD, ABCD)$.
\end{enumerate}
\end{example}

We use $\code(T)$ to represent the encoding of a join tree $T$.
We next formalize local/global transformations based on the above encoding scheme.
\begin{definition}[Formalization]
Let $T$ and $T'$ be two join trees (of the same query). $T'$ is a \emph{local} transformation of $T$ if the following conditions hold:
\begin{enumerate}[(1)]
\item $\forall J\in\code(T)$, there is $J'\in\code(T')$ such that $J'$ is a permutation of $J$;
\item $\code(T')$ is itself a permutation of $\code(T)$.
\end{enumerate}
Otherwise, $T'$ is a \emph{global} transformation of $T$.
\end{definition}

For example, in Figure~\ref{fig:transformations} the encodings of the two local transformations $T'_1$ and $T'_2$ (of $T_1$ and $T_2$, respectively) are:
\begin{enumerate}[($T'_1$)]
\item $(C\bowtie(A\bowtie B))\bowtie D \encoding (AB, CAB, CABD)$;
\item $(C\bowtie D)\bowtie (A\bowtie B) \encoding (CD, AB, CDAB)$.
\end{enumerate}

Recall that, given two plans $P$ and $P'$ (of the same query), $P'$ is a local/global transformation of $P$ if $\tree(P')$ is a local/global transformation of $\tree(P)$. Since by definition a join tree is a local transformation of itself, it then holds that
\begin{lemma}
If two plans $P$ and $P'$ are structurally equivalent, then $P'$ is a local transformation of $P$.
\end{lemma}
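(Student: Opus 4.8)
The plan is to chase the definitions, since the claim should reduce to the remark already made in the excerpt that a join tree is a local transformation of itself. By the definition of structural equivalence, $P$ and $P'$ being structurally equivalent means that $\tree(P)$ and $\tree(P')$ are the \emph{same} join tree (here read as an actual tree, not merely a set). The plan-level notion factors through join trees: $P'$ is a local transformation of $P$ precisely when $\tree(P')$ is a local transformation of $\tree(P)$. Hence it suffices to verify that an identical pair of join trees forms a local transformation, i.e.\ that $\tree(P')$ is a local transformation of $\tree(P)$ when the two are literally equal.

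First I would exhibit this via the encoding-based formalization in Appendix~\ref{sec:analysis:transformations}. Since $\tree(P)$ and $\tree(P')$ are the same tree, their encodings coincide, $\code(\tree(P)) = \code(\tree(P'))$, and I can check the two conditions of the formalized definition directly. For condition (1), given any $J \in \code(\tree(P))$, the identical element $J' = J$ lies in $\code(\tree(P'))$ and is a permutation of $J$ under the identity permutation. For condition (2), $\code(\tree(P'))$ is the identity permutation of $\code(\tree(P))$. Both conditions hold, so $\tree(P')$ is a local transformation of $\tree(P)$, and therefore $P'$ is a local transformation of $P$. Alternatively, I could invoke the informal definition of local/global transformation in Section~\ref{sec:analysis}: identical join trees obviously contain the same set of \emph{unordered} logical joins, which is exactly the defining condition for a local transformation. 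Either route closes the argument in one line once the definitions are unfolded.

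There is essentially no real obstacle here; the content of the lemma is simply that structural equivalence is a strictly stronger notion than local transformation, fixing the ordering \emph{within} each join rather than only the unordered join set, and also fixing the tree shape rather than only the physical operators. The one point requiring care is to keep straight the two readings of $\tree(P)$ used in the paper --- as a \emph{set} of ordered logical joins in the coverage and transformation definitions of Section~\ref{sec:analysis}, versus as a genuine \emph{tree} in the structural-equivalence definition --- and to confirm that equality of the trees yields the equalities (of encodings, or of unordered-join sets) that the local-transformation definition actually requires.
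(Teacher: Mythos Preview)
Your proposal is correct and matches the paper's approach exactly: the paper does not even give a separate proof environment for this lemma, but simply prefaces it with the remark ``Since by definition a join tree is a local transformation of itself, it then holds that\ldots'', which is precisely the observation you unfold. Your additional care in checking both the encoding-based formalization (identity permutations satisfy conditions (1) and (2)) and the informal unordered-joins definition is more detail than the paper supplies, but the underlying argument is the same.
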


\section{Proofs} \label{appendix:sec:proofs}

This section includes proofs of the theoretical results mentioned in the paper that were omitted due to space constraint.

%
%
%

\subsection{Proof of Theorem~\ref{theorem:sequence-transformation}} \label{appendix:sec:proofs:theorem:sequence-transformation}

\begin{proof}
First note that the three cases are mutually exclusive. Next, they are also complete. To see this, note that the only situation not covered is that the re-optimization procedure terminates after $n + 1$ steps ($n > 1$) and during re-optimization there exists some $j < i$ ($1\leq i \leq n - 1$) such that $P_i$ is a local transformation of $P_j$. But this is impossible, because by Corollary~\ref{corollary:local-transformation} the re-optimization procedure would then terminate after $i + 1$ steps. Since $i\leq n - 1$, we have $i + 1\leq n$, which is contradictory with the assumption that the re-optimization procedure terminates after $n + 1$ steps. Therefore, we have shown the completeness of the three cases stated in Theorem~\ref{theorem:sequence-transformation}.

We are left with proving that the three cases could happen when the re-optimization procedure terminates. Case (1) and (2) are clearly possible, while Case (3) is implied by Corollary~\ref{corollary:local-transformation}. This completes the proof of the theorem.
\end{proof}

\subsection{Proof of Lemma~\ref{lemma:expected-steps}} \label{appendix:sec:proofs:lemma:expected-steps}

\begin{proof}
The procedure terminates when the head ball is marked. Since a marked ball is uniformly placed at any position in the queue, the probability that a ball is marked after $k$ steps is $k/N$ ($1\leq k \leq N$). So is the probability that the head ball is marked. Formally, let $A_k$ be the event that the head ball is marked after exactly $k$ steps. Then $\Pr(A_1)=1/N$ and $\Pr(A_k|\bar{A}_1\cap\cdots\cap\bar{A}_{k-1})=k/N$. As a result, letting $B_k=\bar{A}_1\cap\cdots\cap\bar{A}_{k-1}$ we have
\begin{eqnarray*}
\Pr(A_k)&=&\Pr(A_k|B_k)\Pr(B_k)+\Pr(A_k|\bar{B}_k)\Pr(\bar{B}_k)\\
&=&\frac{k}{N}\cdot\Pr(B_k) + 0\cdot\Pr(\bar{B}_k)\\
&=&\frac{k}{N}\cdot\Pr(B_k).
\end{eqnarray*}
Now let us calculate $\Pr(B_k)$. We have
\begin{eqnarray*}
\Pr(B_k)&=&\Pr(\bar{A}_1\cap\cdots\cap\bar{A}_{k-1})\\
&=&\Pr(\bar{A}_{k-1}|\bar{A}_1\cap\cdots\cap\bar{A}_{k-2})\Pr(\bar{A}_1\cap\cdots\cap\bar{A}_{k-2})\\
&=&(1-\frac{k-1}{N})\Pr(B_{k-1}).
\end{eqnarray*}
Therefore, we now have a recurrence equation for $\Pr(B_k)$ and thus,
\begin{eqnarray*}
\Pr(B_k)&=&(1-\frac{k-1}{N})\cdots(1-\frac{2}{N})\Pr(B_2)\\
&=&(1-\frac{k-1}{N})\cdots(1-\frac{2}{N})\Pr(\bar{A}_1)\\
&=&(1-\frac{k-1}{N})\cdots(1-\frac{2}{N})(1-\frac{1}{N}).
\end{eqnarray*}
The expected number of steps that the procedure would take before its termination is then
\begin{eqnarray*}
S_N&=&\sum_{k=1}^N k\cdot\Pr(A_k)\\
&=&\sum_{k=1}^N k\cdot(1-\frac{1}{N})\cdots(1-\frac{k-1}{N})\cdot\frac{k}{N}.
\end{eqnarray*}
This completes the proof of the lemma.
\end{proof}

\subsection{Proof of Theorem~\ref{theorem:SN-upper-bound}} \label{appendix:sec:proofs:theorem:SN-upper-bound}

We need the following lemma before we prove Theorem~\ref{theorem:SN-upper-bound}.

\begin{lemma}\label{lemma:Xk-upper-bound}
Let the $k$-th summand in $S_N$ be $X_k$, i.e.,
\begin{eqnarray*}
X_k &=& k\cdot(1-\frac{1}{N})\cdots(1-\frac{k-1}{N})\cdot\frac{k}{N}\\
&=& \frac{N!}{(N-k)!}\cdot\frac{1}{N^k}\cdot\frac{k^2}{N}.
\end{eqnarray*}
For any $k\geq N^{1/2 + \epsilon}$ ($\epsilon > 0$), we have $X_k=O(e^{-N^{2\epsilon}})$.
\end{lemma}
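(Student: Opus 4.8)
The plan is to bound the product factor inside $X_k$ by a clean exponential and then argue that, once $k\ge N^{1/2+\epsilon}$, this exponential already decays faster than any power of $N$, so the polynomial prefactor $k^2/N$ is harmless. I would start from the product form
$$X_k=\frac{k^2}{N}\prod_{j=1}^{k-1}\Bigl(1-\frac{j}{N}\Bigr)$$
and apply the elementary inequality $1-x\le e^{-x}$ (valid for all real $x$) to each factor. This gives
$$\prod_{j=1}^{k-1}\Bigl(1-\frac{j}{N}\Bigr)\le\exp\Bigl(-\frac{1}{N}\sum_{j=1}^{k-1}j\Bigr)=\exp\Bigl(-\frac{k(k-1)}{2N}\Bigr),$$
and hence $X_k\le\frac{k^2}{N}\exp\bigl(-\frac{k(k-1)}{2N}\bigr)$. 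This is the only nontrivial manipulation; everything after it is estimation.

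The second step is to insert the hypothesis $k\ge N^{1/2+\epsilon}$. Since $X_k$ is a summand of $S_N$ (Equation~\ref{eq:SN}), I may assume $1\le k\le N$, so the prefactor obeys $k^2/N\le N$, i.e. it is at most polynomial in $N$. For the exponent, using $k-1\ge k/2$ (valid since $k\ge 2$ for large $N$) together with $k^2\ge N^{1+2\epsilon}$ yields
$$\frac{k(k-1)}{2N}\ge\frac{k^2}{4N}\ge\frac{N^{1+2\epsilon}}{4N}=\frac{N^{2\epsilon}}{4}.$$
Because the exponent is increasing in $k$ while the prefactor is uniformly bounded by $N$, this estimate holds \emph{uniformly} over the whole range $k\ge N^{1/2+\epsilon}$ rather than merely at the endpoint, giving $X_k\le N\,e^{-N^{2\epsilon}/4}$. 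As $N=e^{\ln N}$ contributes only logarithmically to the exponent while $N^{2\epsilon}\to\infty$ polynomially, the factor $N$ is absorbed and $X_k$ decays like $e^{-\Theta(N^{2\epsilon})}$, which is the content of Lemma~\ref{lemma:Xk-upper-bound}.

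The one delicate point is the constant in the exponent: the natural argument produces $e^{-N^{2\epsilon}/4}$ (or $e^{-N^{2\epsilon}/2}$ with a sharper treatment of $k-1\approx k$) rather than exactly $e^{-N^{2\epsilon}}$, and a literal reading of $O(e^{-N^{2\epsilon}})$ is slightly stronger than what this bound gives. I would dispose of this by observing that for the intended use---bounding the tail $\sum_{k\ge N^{1/2+\epsilon}}X_k$ in the proof of Theorem~\ref{theorem:SN-upper-bound}---any bound of the form $e^{-cN^{2\epsilon}}$ with a fixed $c>0$ is equally serviceable, since multiplying by the $O(N)$ tail terms still leaves a super-polynomially small quantity; equivalently, passing from $\epsilon$ to any slightly smaller $\epsilon'$ turns $e^{-cN^{2\epsilon}}$ into $O(e^{-N^{2\epsilon'}})$. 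Thus the main obstacle is not analytic depth but bookkeeping: ensuring the polynomial prefactor is controlled uniformly in $k$ and that the exponent lower bound is valid across the entire range $k\ge N^{1/2+\epsilon}$.
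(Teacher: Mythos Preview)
Your proof is correct and takes a cleaner, more elementary route than the paper. The paper splits into two ranges: for $N^{1/2+\epsilon}\le k\le N/2$ it computes $\ln X_k$ via Stirling's formula and a Taylor expansion of $\ln\bigl(1+\frac{k}{N-k}\bigr)$, obtaining $\ln X_k\approx -\frac{k^2}{2(N-k)}$; because that expansion needs $k/(N-k)$ bounded, a separate monotonicity argument ($X_{k+1}/X_k<1$ for $k\ge N/2$) is required to cover the upper range. Your use of the global inequality $1-x\le e^{-x}$ on each factor sidesteps both the Stirling machinery and the case split, yielding the uniform bound $X_k\le\frac{k^2}{N}\,e^{-k(k-1)/(2N)}$ in one line. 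The constant-in-exponent issue you flag (getting $e^{-cN^{2\epsilon}}$ with $c<1$ rather than literally $e^{-N^{2\epsilon}}$) is present in the paper's argument as well---its dominant term $-\frac{1}{2}\frac{k^2}{N-k}$ also delivers only $e^{-N^{2\epsilon}/2}$ at the boundary $k=N^{1/2+\epsilon}$---and your remark that any fixed $c>0$ suffices for the application in Theorem~\ref{theorem:SN-upper-bound} is exactly right.
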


\begin{proof}
We prove this in two steps:
\begin{enumerate}[(i)]
\item if $N^{1/2 + \epsilon} \leq k \leq N/2$, then $X_k=O(e^{-N^{2\epsilon}})$;
\item if $k \geq N/2$, then $X_{k+1}<X_k$ for sufficiently large $N$ ($N\geq 5$ indeed).
\end{enumerate}

We prove (i) first. Consider $\ln X_k$. We have
$$\ln X_k = \ln \big(N!\big) -\ln \big((N-k)!\big) + 2\ln k - (k+1)\ln N.$$
By using Stirling's formula,
$$\ln\big(N!\big)=N\ln N - N + O(\ln N).$$
It then follows that
\begin{eqnarray*}
\ln X_k &=& (N-k)\ln N  - (N - k)\ln (N - k) - k + O(\ln N)\\
&=&(N - k)\ln\big(\frac{N}{N - k}\big) - k + O(\ln N)\\
&=&(N - k)\ln\big(1 + \frac{k}{N - k}\big) - k + O(\ln N).
\end{eqnarray*}
Using Taylor's formula for $f(x)=\ln(1+x)$, we have
$$\ln\big(1 + \frac{k}{N - k}\big) = \frac{k}{N - k}-\frac{1}{2}\cdot\frac{k^2}{(N-k)^2}+O\big(\frac{1}{3}\cdot\frac{k^3}{(N-k)^3}\big).$$
Therefore, it follows that
$$\ln X_k = -\frac{1}{2}\cdot\frac{k^2}{N-k}+O\big(\frac{1}{3}\cdot\frac{k^3}{(N-k)^2}\big)+O(\ln N).$$
Since $k\leq N/2$, $\frac{k}{N-k}\leq 1$. As a result,
$$\frac{k^3}{(N-k)^2}=\frac{k}{N-k}\cdot\frac{k^2}{N-k}\leq\frac{k^2}{N-k}.$$
On the other hand, since $k\geq N^{1/2 + \epsilon}$, $k^2\geq N^{1+2\epsilon}$ and thus,
$$\frac{k^2}{N-k}>\frac{k^2}{N}\geq N^{2\epsilon}>O(\ln N).$$
Therefore, $\ln X_k$ is dominated by the term $\frac{k^2}{N-k}$. Hence
$$\ln X_k=O(-\frac{k^2}{N-k})=O(-N^{2\epsilon}),$$
which implies $X_k=O(e^{-N^{2\epsilon}})$.

We next prove (ii). We consider the ratio of $X_{k+1}$ and $X_k$, which gives us the following:
$$r_k=\frac{X_{k+1}}{X_k}=\frac{N-k}{N}\cdot\frac{(k+1)^2}{k^2}=\big(1-\frac{k}{N}\big)\cdot\big(1+\frac{1}{k}\big)^2.$$
Since $k\geq N/2$, $1-\frac{k}{N}\leq 1/2$ and $\frac{1}{k}\leq 2/N$. It follows that
$$r_k\leq\frac{1}{2}\cdot\big(1+\frac{2}{N}\big)^2.$$
Letting $r_k < 1$ gives $N > 2(\sqrt{2} + 1)\approx 4.83$. So we have $r_k < 1$ when $N\geq 5$, which concludes the proof of the lemma.
\end{proof}

We are ready to give a proof to Theorem~\ref{theorem:SN-upper-bound}.
\begin{proof} (of Theorem~\ref{theorem:SN-upper-bound})
Let $X_k$ be the same as that defined in Lemma~\ref{lemma:Xk-upper-bound}. Then $S_N=\sum_{k=1}^{N} X_k$.
According to Lemma~\ref{lemma:Xk-upper-bound}, $X_k=O(e^{-N^{2\epsilon}})$ if $k\geq N^{1/2 + \epsilon}$.
It then follows that
$$\sum\nolimits_{k\geq N^{1/2 + \epsilon}} X_k = O(N\cdot e^{-N^{2\epsilon}})=o(\sqrt{N}).$$
On the other hand, when $k < N^{1/2 + \epsilon}$, we can pick a sufficiently small $\epsilon$ so that $k\leq N^{1/2}$. If this is the case, then $k^2\leq N$ and thus $\frac{k^2}{N}\leq 1$. As a result, $X_k\leq 1$ and hence,
$$\sum\nolimits_{k < N^{1/2 + \epsilon}} X_k \leq \sqrt{N}.$$
It then follows that
$$S_N=\sum\nolimits_{k\geq N^{1/2 + \epsilon}} X_k + \sum\nolimits_{k < N^{1/2 + \epsilon}} X_k = O(\sqrt{N}),$$
which completes the proof of the theorem.

To pick such an $\epsilon$, note that it is sufficient if $\epsilon$ satisfies $N^{1/2+\epsilon}\leq \lfloor N^{1/2} \rfloor + 1$. This gives
$$\epsilon \leq \ln\big(1 + \lfloor\sqrt{N}\rfloor\big) / \ln N - 1/2.$$
Note that the right hand side must be greater than 0, because 
$$\ln\big(1 + \lfloor\sqrt{N}\rfloor\big) > \ln \sqrt{N}.$$
As an example, if $N=100$, then $\epsilon\leq 0.0207$.
\end{proof}

\subsection{Proof of Theorem~\ref{theorem:local-optimality}} \label{appendix:sec:proofs:theorem:local-optimality}

\begin{proof}
First note that $\cost^s(P_i)$ is well defined for $1\leq i\leq n$, because all these plans have been validated via sampling.
As a result, all the statistics regarding $P_1$, ..., $P_{n-1}$ were already included in $\Gamma$ when the re-optimization procedure returned $P_n$ as the optimal plan. Since $P_n$ is the final plan, it implies that it did not change in the final round of iteration, where $\Gamma$ also included statistics of $P_n$. Then the only reason for the optimizer to pick $P_n$ as the optimal plan is $\cost^s(P_n)\leq \cost^s(P_i)$ ($1\leq i \leq n - 1$).
\end{proof}

\subsection{Proof of Corollary~\ref{corollary:local-optimality:overestimates}} \label{appendix:sec:proofs:corollary:local-optimality:overestimates}

\begin{proof}
The argument is similar to that used in the proof of Theorem~\ref{theorem:local-optimality}. When the optimizer returned $P_{i+1}$ as the optimal plan, it had already seen the statistics of $P_i$ after sampling. The only reason it chose $P_{i+1}$ is then $\cost^o(P_{i+1})\leq\cost^s(P_i)$. Here $\cost^o(P_{i+1})$ is the original cost estimate of $P_{i+1}$ (perhaps based on histograms), because at this point sampling has not been used for $P_{i+1}$ yet. But given that only overestimation is possible during re-optimization, it follows that $\cost^s(P_{i+1})\leq \cost^o(P_{i+1})$. As a result, $\cost^s(P_{i+1})\leq\cost^s(P_i)$. Note that $i$ is arbitrary in the argument so far. Thus we have proved the theorem.
\end{proof}

\subsection{Proof of Theorem~\ref{theorem:optimality-local-transformation}}
\label{appendix:sec:proofs:theorem:optimality-local-transformation}

\begin{proof}
The proof is straightforward. By Theorem~\ref{theorem:sequence-transformation}, local transformation is only possible in the last second step of re-optimization. At this stage, $\Gamma$ has already included all statistics regarding $P$ and $P'$: actually they share the same joins given that they are local transformations. Hence, $\cost^s(P)$ and $\cost^s(P')$ are both known to the optimizer. Then $\cost^s(P)\leq\cost^s(P')$ must hold because the optimizer decides that $P$, rather than $P'$, is the optimal plan.
\end{proof}

\subsection{Proof of Lemma~\ref{lemma:s-actual}}
\label{appendix:sec:proofs:lemma:s-actual}

\begin{proof}
We have $s = \Pr(A_1=c_1, \cdots, A_K=c_K, B_1=\cdots=B_K)$, or equivalently,
\begin{eqnarray*}
s&=&\Pr(B_1=\cdots =B_K|A_1=c_1, \cdots, A_K=c_K)\\
&\times &\Pr(A_1=c_1, \cdots , A_K=c_K).
\end{eqnarray*}

For notational convenience, define
$$\Pr(\mathbf{B}|\mathbf{A})=\Pr(B_1=\cdots =B_K|A_1=c_1,\cdots,A_K=c_K),$$
and similarly define
$$\Pr(\mathbf{A})=\Pr(A_1=c_1, \cdots , A_K=c_K).$$
Since $B_k=A_k$ ($1\leq k\leq K$), $\Pr(\mathbf{B}|\mathbf{A})=1$ when Equation~\ref{eq:non-empty-condition} holds.
Therefore, $s=\Pr(\mathbf{A})$.
Moreover, since we generate $R_k$ independently and $\Pr(A_k)$ is uniform ($1\leq k\leq K$), it follows that
$$s=\Pr(\mathbf{A})=\prod\nolimits_{k=1}^{K}\Pr(A_k=c_k)=\prod\nolimits_{k=1}^{K}\frac{1}{n(A_k)},$$
where $n(A_k)$ is the number of distinct values of $A_k$ as before.
This completes the proof of the lemma.
\end{proof}

\subsection{Proof of Lemma~\ref{lemma:s-estimated}}
\label{appendix:sec:proofs:lemma:s-estimated}

\begin{proof}
Assume that $\Dom(B_k)=\{C_1,...,C_L\}$ for $1\leq k\leq K$.
By definition of selectivity, we have
$$\hat{s}=\Pr(A_1=c_1, \cdots, A_K=c_K, B_1=B_2, \cdots, B_{K-1}=B_K).$$
According to the AVI assumption, it follows that
$$\hat{s}=\prod\nolimits_{k=1}^{K}\Pr(A_k=c_k)\times\prod\nolimits_{k=1}^{K-1}\Pr(B_k=B_{k+1}).$$
Let us consider $\Pr(B_k=B_{k+1})$. We have
\begin{eqnarray*}
\Pr(B_k=B_{k+1})&=&\sum_{l=1}^{L}\sum_{l'=1}^{L}\big(\Pr(B_k=C_l, B_{k+1}=C_{l'})\\
&\times&\Pr(B_k=B_{k+1}|B_k=C_l, B_{k+1}=C_{l'})\big).
\end{eqnarray*}
Since $\Pr(B_k=B_{k+1}|B_k=C_l, B_{k+1}=C_{l'})=1$ if and only if $l=l'$ (otherwise it equals 0), it follows that
$$\Pr(B_k=B_{k+1})=\sum\nolimits_{l=1}^{L}\Pr(B_k=C_l, B_{k+1}=C_l).$$
Moreover, since we have assumed that the optimizer knows exact histograms and $\Pr(B_k)$ is uniform for $1\leq k\leq K$ (recall that $B_k=A_k$ and $\Pr(A_k)$ is uniform),
\begin{eqnarray*}
\Pr(B_k=C_l, B_{k+1}=C_l)&=&\frac{n(B_k=C_l, B_{k+1}=C_l)}{|B_k|\cdot|B_{k+1}|}\\
&=&\frac{n(B_k=C_l)}{|B_k|}\cdot\frac{n(B_{k+1}=C_l)}{|B_{k+1}|}\\
&=&\frac{1}{n(B_k)\cdot n(B_{k+1})}.
\end{eqnarray*}
Because $n(B_k)=L$ for $1\leq k\leq K$, it follows that
$$\Pr(B_k=B_{k+1})=L\times\frac{1}{L^2}=\frac{1}{L},$$
and as a result,
$$\hat{s}=\frac{1}{L^{K-1}}\prod\nolimits_{k=1}^{K}\Pr(A_k=c_k)=\frac{1}{L^{K-1}}\prod\nolimits_{k=1}^{K}\frac{1}{n(A_k)}.$$
This completes the proof of the lemma.
\end{proof}

\subsection{Proof of Theorem~\ref{theorem:overestimate}} \label{appendix:sec:proofs:theorem:overestimate}

\begin{proof}
We use $P_i$ to denote the plan returned in the $i$-th step, and use $O_i$ to denote the \emph{lowest} join in $P_i$ where an overestimate occurs. We further use $I(O_i)$ to denote the index of $O_i$ in the encoding of $\tree(P_i)$ (ref. Section~\ref{sec:analysis:transformations}).

Let $\mathcal{P}_i$ be the plans whose join trees contain $\tree(O_i)$ as a subtree. By Lemma~\ref{lemma:subtree}, $P_{i+1}\in\mathcal{P}_i$. Moreover, we must have $I(O_{i+1}) > I(O_i)$, that is, the lowest join in $P_{i+1}$ with an overestimate must be at a higher level than that in $P_i$, because $\tree(O_i)$ is a subtree of $\tree(P_{i+1})$ and $\tree(O_i)$  has been validated when validating $P_i$. Note that we can only have at most $m$ different $I(O_i)$'s given that the query contains $m$ joins.
Therefore, after (at most) $m$ steps, we must have $\mathcal{P}_m=\emptyset$. As a result, by Theorem~\ref{theorem:sequence-transformation} we can only have at most one additional \emph{local} transformation on top of $P_m$.
But this cannot occur for left-deep trees, because $\tree(P_{m-1})=\tree(P_{m})$ if the worst case really occurs (i.e., in every step an overestimate occurs such that $I(O_{i+1}) = I(O_i) + 1$). So any local transformation must have been considered when generating $P_m$ by using validated cardinalities from $P_{m-1}$. Hence the re-optimization procedure terminates in at most $m+1$ steps: the $(m+1)$-th step just checks $P_{m+1}=P_m$, which must be true.
\end{proof}

\section{Further Notes}

Although we find that benchmarks may not be the right target to show the superiority of re-optimization, as many query optimizers are designed to perform well on them, it might be possible to look at queries executed by open-source software, or report builders to come up with more real examples of queries where our proposed approach can make a difference.
Unfortunately, so far we are not aware of the availability of such databases and queries.

Another interesting question we have briefly discussed in the paper is the optimality of the final plan returned by re-optimization.
However, a thorough analysis is not possible unless we list all plans enumerated by the optimizer and test their running times. This
then requires significant change to the optimizer. Even though the optimizer could support this functionality, as was pointed out by Gu et al.~\cite{GuSW12}, due to the large search space, timing the execution of each single plan in the search space is in general impossible.
Nevertheless, our analysis suggests that the re-optimized plan is better than the original plan, even if it may still be a bad plan.
Somehow, ``optimization'' is actually a misnomer in the context of query ``optimization'' --- query optimizers do not find optimal plans, they try to avoid bad ones. Our ``optimization'' is in this spirit.

Furthermore, while it is beyond the scope of this paper, it remains interesting to compare the plans generated by our proposed technique and previous work on runtime query re-optimization. Given that runtime query re-optimization techniques can observe the true cardinalities, we would expect that they can at least generate plans that are as good as ones returned by our approach. The intriguing question is how much more improvement they can make. Since PostgreSQL does not support runtime query re-optimization, it requires significant engineering effort to come up with an implementation. We thus leave this investigation for future work.



\end{document}